\def\beginofproof{\noindent{\it Proof: }}
\def\endofproof{\hfill\rule{6pt}{6pt}}
\newtheorem{theorem}{Theorem}
\newtheorem{definition}{Definition}
\newtheorem{corollary}{Corollary}
\newtheorem{example}{Example}
\newcommand{\twotriangle}{\hfill $\bigtriangleup \bigtriangleup$  }
\newcommand{\eax}{\twotriangle  \end{example}}
\newcommand\bim{\begin{itemize}}
\newcommand\eim{\end{itemize}}
\newcommand\bH{{\bf H}}
\newcommand\bM{{\bf M}}
\newcommand\bO{{\bf O}}
\newcommand\bA{{\bf A}}
\newcommand\bC{{\bf C}}
\newcommand\bI{{\bf I}}
\newcommand\bB{{\bf B}}
\newcommand\bP{{\bf P}}
\newcommand\bV{{\bf V}}
\newcommand\bW{{\bf W}}
\newcommand\ba{{\bf a}}
\newcommand\bb{{\bf b}}
\newcommand\bh{{\bf h}}
\title{
Low-Density Arrays of Circulant Matrices: Rank and Row-Redundancy Analysis, and Quasi-Cyclic LDPC Codes}
\begin{document}

\author{
Qin Huang$^1$ and  Keke Liu$^2$ and Zulin Wang$^1$\\
$^1$School of Electronic and Information Engineering\\
Beihang University\\
Beijing 100083, China\\
(email:qhuang.smash@gmail.com; wzulin@vip.sina.com)\\
$^2$Department of Electrical and Computer Engineering\\
University of California, Davis\\
Davis, CA 95616\\
(email:kkeliu@ucdavis.edu)
}

\maketitle

\begin{abstract}
This paper is concerned with general analysis on the rank and row-redundancy of an array of circulants whose null space defines a QC-LDPC code. Based on the Fourier transform and the properties of conjugacy classes and Hadamard products of matrices, we derive tight upper bounds on rank and row-redundancy for general array of circulants, which make it possible to consider row-redundancy in constructions of QC-LDPC codes to achieve better performance. We further investigate the rank of two types of construction of QC-LDPC codes: constructions based on Vandermonde Matrices and Latin Squares and give combinatorial expression of the exact rank in some specific cases, which demonstrates the tightness of the bound we derive. Moreover, several types of new construction of QC-LDPC codes with large row-redundancy are presented and analyzed.
\end{abstract}

\section{Introduction}

   Quasi-cyclic (QC) codes have been a challenging and ongoing research subject in algebraic coding theory since their introduction in late 1960's [1]. These codes asymptotically achieve the Varshamov-Gilbert bound [2].  Recent research of these codes has been focused on a subclass of these codes, known as QC low-density parity-check (LDPC) codes.

   LDPC codes were first discovered by Gallager in 1962 [3] and then rediscovered in late 1990's [4], [5].  Ever since their rediscovery, a great deal of research effort has been expended in design, construction, structural and performance analysis, encoding, decoding, generalizations, and applications of LDPC codes. They have been shown to achieve the Shannon capacities for a wide range of channels with iterative decoding based on belief propagation.

   Major methods for constructing LDPC codes can be divided into two general categories, graph-theoretic based and algebraic methods. Each type of constructions has its advantages and disadvantages in terms of overall performance, encoding and decoding implementations.  In general, algebraically constructed LDPC codes have lower error-floors and their decoding using iterative message-passing algorithms, such as the sum-product algorithm (SPA) and the min-sum algorithm (MSA) converge faster than the LDPC codes of the length and rates constructed using the graph-theoretic-based methods.  Furthermore, it is much easier to construct algebraic LDPC codes with large minimum distances.  Algebraic constructions of LDPC codes are mainly based on finite fields, finite geometries, and combinatorial designs.  These constructions result in mostly QC-LDPC codes.

   QC-LDPC codes have advantages over other types of LDPC codes in hardware implementation of encoding and decoding. Encoding of a QC-LDPC code can be efficiently implemented using simple shift registers with complexity linearly proportional to its number of parity-check symbols (or its length) [6].  In hardware implementation of its decoder, the quasi-cyclic structure of the code (or circular structure of its parity-check matrix) simplifies the wire routing for message passing [7] and allows partially parallel decoding [8] which offers a tradeoff between decoding complexity and decoding speed. Furthermore, well designed or constructed QC-LDPC code can perform as well as any other types of LDPC codes.  Most of LDPC codes adopted as standard codes for various next generations of communication systems are QC-LDPC codes.

   A $q$-ary QC-LDPC code is given by the null space of an array $\bH$  of sparse \emph{circulant} matrices (or simply circulants) of the same size over the field GF($q$) where $q $ is a power of prime.  If the array $\bH $, viewed as a matrix, has constant column weight $\gamma$  and constant row weight $\rho $, the code given by the null space of $\bH$  is said to be ($\gamma $,$\rho$)-regular, otherwise it is said to be irregular.

   In almost all of the proposed constructions of LDPC codes, the following constraint on the rows and columns of the parity-check matrix $ \bH$  is imposed:  \emph{no two rows (or two columns) can have more than one place where they both have identical non-zero components}. This constraint on the rows and columns of $\bH$  is referred to as the  \emph{row-column (RC)-constraint}. This RC-constraint ensures that the Tanner graph [9] of the LDPC code given by the null space of $\bH$  has a girth of at least 6 and that the minimum distance of the code, if ($\gamma $,$\rho $)-regular, is at least $\gamma  + 1 $ [10], [11].  The distance bound is tight for regular LDPC codes whose parity-check matrices have large column weights and  \emph{row redundancies}, such as the algebraic LDPC codes constructed using finite fields, finite geometries and combinatorial designs. A parity-check matrix $\bH$  that satisfies the RC-constraint is called an RC-constrained parity-check matrix and the code given by its null space is called an RC-constrained LDPC code.

   The overall performance of an LDPC code with iterative decoding based on belief propagation is measured by: (1) its bit and block error performance (i.e., how close it performs to the Shannon limit or sphere packing bound); 2) the rate of decoding convergence (i.e., how fast the decoding converges to a valid codeword); (3) its error-floor; and (4) how efficient it can be encoded and decoded. Extensive studies and simulation results show that the performance of an LDPC code is determined by a number of structural properties of the code collectively: (1) minimum distance (or minimum weight); (2) girth and cycle distribution of its Tanner graph; (3) degree distributions of variable- and check-nodes of its Tanner graph; (4) trapping set distribution of its Tanner graph; (5) row redundancy of its parity-check matrix; and (6) other unknown structures. No single structural property dominates the performance of a code. It is still unknown how the code performance depends on the above structural properties analytically as a function. However, some general information is known how to design (or construct) LDPC codes that perform well.  Recently, it was proved that for an RC-constrained ($\gamma$,$\rho$)-regular LDPC code, its Tanner graph contains no trapping set of size $\kappa\leq \gamma$ with the number of odd degree check-nodes smaller than $\gamma$ [12].  Several classes of algebraic LDPC codes were proved that they do not contain harmful trapping sets of sizes smaller than their minimum distances. Consequently, the error-floors of these codes are primarily dominated by their minimum distances.

   For a code with a given rate to perform close to the Shannon capacity (or its threshold) in the waterfall region, the degree distributions of the variable- and check-nodes of its Tanner graph must be properly designed (say, based on density evolution [13]).  For a code to have low error-floor, it must have a relatively large minimum distance and no harmful trapping sets with sizes smaller than its minimum distance.  In this case, the error-floor is dominated by its minimum distance.  Furthermore, the error-floor performance of a code also depends on the girth of the code's Tanner graph.  In general, a girth of 6 is enough if the code has large minimum distance and no small trapping sets.  For the decoding of a code to converge fast, besides requiring no harmful small trapping sets and relatively large minimum distance, large row redundancy (large number of dependent rows) of its parity-check matrix helps.  Extensive simulation results show that the decoding of a code converges very fast, if its parity-check matrix has a large row redundancy.  Cyclic and QC-LDPC codes constructed using finite fields, finite geometries and combinatorial designs do have large row redundancies in their parity-check matrices.  Iterative decoding of these codes does converge very fast.  For efficient encoding and decoding hardware implementation of an LDPC code, quasi-cyclic or cyclic structure is desirable.  How to design or construct an LDPC code with the above good structures is an unsolved but challenging problem.

   In general, QC-LDPC codes (regular or irregular) given by the null spaces of arrays of circulants constructed algebraically based on finite fields, finite geometries and combinatorial designs [10], [12], [14]-[24] do have a good balance in terms of minimum distance, trapping set structure, row redundancy, and girth.  Masking the parity-check array of a QC-LDPC code based on well designed degree distributions of the code's Tanner graph also provides good error performance in the waterfall region as shown in [20].

   Recent development in QC-LDPC codes (QC codes in general) is the introduction of a matrix-theoretic approach for studying these code based on matrix transformation via  \emph{Fourier transforms} [25], [26].  This approach is amicable to the analysis and construction of QC-LDPC codes.  In Fourier transform domain, the parity-check matrix of a QC-LDPC code, as an array of circulants, is specified by a set of  \emph{base matrices} (or a single base matrix) over a finite field that satisfies  certain constraints.  Based on these base matrices, an RC-constrained array of sparse circulants can be easily constructed.  The null space of this RC-constrained array then gives an RC-constrained QC-LDPC code whose Tanner graph has a girth of at least 6.  From these base matrices, it is quite easy to analyze the rank of the parity-check array of the code and to derive the necessary and sufficient condition for the code's Tanner graph to have a given girth.  The Fourier transform approach put all the algebraic constructions of QC-LDPC codes developed in [17], [20], [21]-[24], [25] under a single framework.

   Although many types of algebraic constructions of QC-LDPC codes have been proposed and some rank expressions have been given [17], [20], [21]-[24], [25], there is still a lack of general algorithms and guideline to construct QC-LDPC codes with large redundancy, and the existing rank analysis and expressions are only applicable to some specific cases.   

   In this paper, we follow the Fourier transform approach presented in [25] to expand the analysis and construction of new QC-LDPC codes.  Analysis includes the rank and row redundancy of an array of circulants whose null space gives a QC-LDPC code.  A recursive algorithm for computing the rank or the row redundancy of a parity-check array in terms of its base matrices in Fourier transform domain is developed. Tight upper and lower bounds on the rank and row redundancy of an array of circulants are derived.  In special cases, combinatorial expressions for the exact ranks are obtained. New constructions of algebraic QC-LDPC codes in the Fourier transform domain with large redundancy are given, and the simulation result demonstrate that the constructed QC-LDPC codes outperform the corresponding random LDPC codes .  Ranks and row redundancies of the parity-check arrays of some known QC-LDPC codes are further investigated.

   The organization of the rest of this paper is as following: First, we present the characterization of QC codes, binary QC-LDPC codes and nonbinary QC-LDPC codes in the Fourier transform domain in Section II, Section III and Section IV, respectively. Then, we analyze ranks and row redundancies of QC-LDPC codes in terms of transform domain in Section V. A tight upper bound on ranks and a tight lower bound on row redundancies are given in this section. Later, we explain the reason why row redundancies can increase the performance of message-passing algorithms. In Section VI, we further analyze the rank for two types of well-known LDPC codes constructed based on Vandermonde Matrices and Latin Squares and show that the bound derived in Section V we construct a class of RC-constrained QC-LDPC codes based on random partitions of finite fields. In Section VII, we propose several types of constructions of QC-LDPC codes with large row redundancies which outperform random LDPC codes. The paper is summarized in the last section.

\section{Characterization of QC Codes in the Fourier Transform Domain}

   In this paper, we consider only QC-LDPC codes constructed from finite fields of characteristic of 2.  In this and next sections, we give a review of characterization of QC codes in Fourier transform domain presented in [25]. Some new interpretations and extensions are given.

\subsection{ Matrix Transformation}

   Let GF($2^r$) be a finite field with $2^r$ elements which is an extension field of the binary field GF(2). Let $\alpha$  be a primitive element of GF($2^r$).  Then, the powers of $\alpha $, $\alpha ^{-\infty} $,$ \alpha ^0 = 1$, $\alpha$, $\alpha ^2$, . . . , $\alpha ^{2^r - 2}$, give all the elements of  GF($2^r$) and $\alpha ^{2^r -1} = 1$.

   Let $q = 2^r$ and $e = 2^r - 1$. Let $\ba = (a_0, a_1, . . . , a_{e-1})$ be an $e$-tuple (or vector) over GF(2). Its  \emph{Fourier transform} [25], [27] denoted by ${\cal {\cal F}}[\ba]$, is given by the $e$-tuple $\bb = (b_0, b_1, ... , b_{e-1})$ over GF($q$) whose $t$-th component, $b_t$, for $0 \leq  t < e$, is given by
\begin{equation}              
                b_t = a_0 + a_1\alpha ^t + a_2\alpha ^{2t} + \cdots + a_{e-1}\alpha ^{(e-1)t}. 
\end{equation}                      
The vector $\ba$, which is called the  \emph{inverse Fourier transform} of the vector $\bb$, denoted by $\ba = {\cal {\cal F}}^{-1}[\bb]$, can be retrieved using the following equation:
\begin{equation}
               a_l = b_0 + b_1\alpha ^{-l} + b_2\alpha ^{-2l} + \cdots + b_{e-1}\alpha ^{-(e-1)l} 
\end{equation}
for $ 0 \leq  l < e$.

   An $e\times e$ matrix of over a field is called a circulant if every row is a  \emph{cyclic-shift} (one place to right) of the row above it and the first row is the cyclic-shift of the last row.  A circulant is uniquely specified by its top row which is called the \emph{generator} of the circulant.

   Let $\bA  = [a_{ij}]$, $0 \leq  i, j < e$, be an $e\times e$ circulant over GF(2). Then, we write $\bA  = circ(a_0, a_1, . . . , a_{e-1})$, where $(a_0, a_1, . . . , a_{e-1})$ is the generator of $\bA $. Define two $e\times e$ matrices over GF($q$) as follows: $\bV = [\alpha ^{-ij}]$,  $0 \leq  i, j < e$ and $\bV ^{-1} = [\alpha ^{ij}]$,  $0 \leq  i, j <e$. Both matrices, $\bV $ and $\bV ^{-1}$, are known as  \emph{Vandermonde} matrices [27], [28] and they are non-singular. Furthermore $\bV\bV^{-1} = \bI$, where $\bI $ is an $e\times e$ identity matrix.  Hence, $\bV ^{-1}$ is the inverse of $\bV  $ and vice versa. Taking the matrix product $\bV\bA\bV^{-1}$, we obtain the following $e\times e$ diagonal matrix over GF($2^r$),
\begin{equation}
           {\bA}^{\cal {\cal F}} =  \bV\bA\bV^{-1}  = diag(b_0, b_1, ..., b_{e-1}).
\end{equation}
where the diagonal vector $(b_0, b_1, . . . , b_{e-1})$ is the Fourier transform of the  generator  $(a_0, a_1, . . . , a_{e-1})$ of the circulant of $\bA $.  The diagonal matrix $\bA ^{\cal {\cal F}} = \bV\bA\bV^{-1}$ is referred to as the Fourier transform of the circulant $\bA $.  In the rest of the paper, we only consider circulants over GF($q$) of size $e\times e$ with $q = 2^r$  and $e = 2^r - 1$.


   Since $(a_0, a_1, . . . , a_{e-1})$ is an $e$-tuple over GF(2), the components must satisfy the following constraint [25]:
\begin{equation}
                                              b_{(2t)_e} = b_t ^2
\end{equation}
for $0 \leq  t < e$, where $(2t)_e$ denotes the nonnegative integer less than $e $ and congruent to $2t \mbox{ modulo } e$.  This condition is known as the  \emph{conjugacy constraint}. Conversely, if an $e$-tuple $(b_0, b_1, . . . , b_{e-1})$ over GF($q$) satisfies the conjugacy constraint, its inverse Fourier transform gives an $e$-tuple $(a_0, a_1, . . . , a_{e-1})$ over GF(2).

   Let $m $ and $n $ be two positive integers. Let $\bH  = [\bA_{i,j}]$ $0 \leq  i < m$, $0 \leq  j < n$ be an $m\times n$ array of $e\times e$ circulants $\bA_{i,j}$ over GF(2). For $0 \leq  i < m$, $0 \leq  j < n$, let $(a_{i,j,0}, a_{i,j,1}, . . . , a_{i,j,e-1})$ be the generator of the circulant $\bA_{i,j}$. Next, we define two diagonal arrays of $e\times e$ Vandermonde matrices $\bV  $ and $\bV ^{-1}$ as follows:
\begin{equation}
\begin{array}{ccc}
                         \Omega (m) &=& diag( \underbrace{\bV , \bV , . . . , \bV}_{m}  ),   \\
            \end{array}         \end{equation}
\begin{equation}
\begin{array}{ccc}
                         \Omega ^{-1}(n) &=& diag(\underbrace{\bV ^{-1}, \bV ^{-1}, . . . , \bV ^{-1}}_{n}), \\
              \end{array}       
\end{equation}
where $\Omega (m)$ is an $m\times m$ diagonal array of the Vandermode matrices $\bV $'s and $\Omega ^{-1}(n)$ is an $n\times n$ diagonal array of Vandermonde matrices $\bV ^{-1}$'s.  Then the Fourier transform of $\bH$  is given as follows:
 \begin{equation}
\begin{array}{lll}
             \bH^{\cal {\cal F}} &= &\Omega(m) \bH \Omega ^{-1}(n)   \\
 &=&                               
\left[\begin{array}{llll}
    \bA_{0,0}^{\cal F}   &     \bA_{0,1}^{\cal F} &  ... &     \bA_{0,n-1}^{\cal F} \\
         \bA_{1,0}^{\cal F} &   \bA_{1,1}^{\cal F} &  ...&     \bA_{1,n-1}^{\cal F}    \\
          \vdots& &\ddots & \vdots\\
                                       \bA_{m-1,0}^{\cal F} &    \bA_{m-1,1}^{\cal F}&  ...&  \bA_{m-1,n-1}^{\cal F}   
\end{array}\right].
                          \end{array}
\end{equation}
where $\bA_{i,j}^{\cal F} = \bV\bA_{i,j}\bV^{-1}$, $ 0 \leq  i < m$, $0 \leq  j < n$, is an $e\times e$ diagonal matrix over GF($2^r$) with diagonal vector $(b_{i,j,0}, b_{i,j,1}, . . . , b_{i,j,e-1})$ which is the Fourier transform of the generator $(a_{i,j,0}, a_{i,j,1}, . . . , a_{i,j,e-1})$ of $\bA_{i,j}$.

   The array $\bH ^{\cal F}$ is an $m e\times ne$ matrix over GF($q$).  Label the rows and columns of $\bH ^{\cal F}$ from $0$ to $me - 1$ and 0 to $ne - 1$, respectively.    Define the following index sequences: for $0 \leq  i, j < e$, 
\begin{equation}
                  \pi _{row,i} = [i, e+i, ..., (m-1)e+i],
\end{equation}
and 
\begin{equation}
                  \pi _{col,j} = [j, e+j, ..., (n-1)e+j].
                  \end{equation}
Let
\begin{equation}
                         \pi_{ row} = [\pi _{row,0}, \pi_{ row,1}, . . . , \pi_{ row,2^r-2}],
\end{equation}
and
 \begin{equation}
                         \pi _{col} = [\pi _{col,0}, \pi_{ col,1}, . . . , \pi_{ col, 2^r-2}].
\end{equation}
Then $\pi_{row}$ gives a permutation of the indices (labels) of the rows of $\bH ^{\cal F}$ and $\pi _{col}$ gives a permutation of the indices of columns of $\bH ^{\cal F}$.

   Suppose we first permute the rows of $\bH ^{\cal F}$ based on $\pi_{row}$ and then the columns based on $\pi _{col}$. These row and column permutations result in the following $e\times e$ diagonal array of $m \times n$ matrices over GF($q$),
\begin{equation}
\begin{array}{ccl}
                 \bH ^{{\cal F},\pi } &=& diag(\bB_0, \bB_1, . . . , \bB_{2^r-2})\\
&=& \left[\begin{array}{ccccc}
                                  \bB_0 &   {\bf O} &      {\bf O}  &.  .  .& {\bf O} \\
                                 {\bf O} &\bB_1   &  {\bf O}  &  .  .  .&  {\bf O} \\
                                  \vdots &  & & \ddots & \vdots\\
                                  {\bf O} &  {\bf O} &  {\bf O} &  .  .  .&   \bB_{2^r-2}
\end{array}\right],
\end{array}
\end{equation}
where $\pi  = (\pi_{ row}, \pi_{ col})$ denotes the combination of the row and column permutations, $\pi _{row}$ and $\pi_{ col}$. The $m \times n$ matrices $\bB_i$'s on the diagonal of the array $\bH ^{F,\pi }$ satisfy the conjugacy constraint.  To specify this constraint, we introduce the concept of Hadamard product.

    Let $\bB = [b_{i,j}]$ and $\bC = [c_{i,j}]$ be two matrices of the same size. The Hadamard product of $\bB$ and $\bC$ [29], denoted by $\bB\circ \bC$, is defined as their  \emph{element-wise} product, i.e., $\bB\circ \bC = [b_{i,j}c_{i,j}]$. The Hadamard product of $t$ copies of the matrix $\bB$, where $t$ is a nonnegative integer, denoted by $\bB^{\circ t}$, is $\bB^{\circ t} = [b_{i,j}^t]$ which is referred to as the $t$-th  \emph{Hadamard power} of $\bB$. We allow $t$ to equal 0 and in this case $b_{i,j}^0 = 1$ if  $b_{i,j}$ is a nonzero element in GF($2^r$) and $b_{i,j}^0 = 0$ if $b_{i,j} = 0$.

   For an array $\bH$  of circulants and zero matrices over GF(2), the matrices on the main diagonal of the array $\bH ^{{\cal F},\pi}$ satisfy the conjugacy constraint [25],
\begin{equation}
                                         \bB_{(2t)_e} = \bB_t^{\circ 2},
\end{equation}
i.e., the entry at the location $(i,j)$ of $\bB_{(2t)_e}$ is the square of the entry at the location $(i,j)$ of $\bB_t$.  We call the matrix $\bB_{(2t)_e}$  a \emph{conjugate matrix} of $\bB_t$.

   Conversely, if the matrices on a diagonal array of the form given by (12) satisfy the conjugacy constraint given by (13), then the array obtained by taking inverse row and column permutations and inverse Fourier transform, we obtain an array of circulants over GF(2).

   The transformation from $\bH$  to $\bH ^{{\cal F},\pi }$ through $\bH ^{\cal F}$ is reversible. Given an $e\times e$ diagonal array ${\hat \bH} = \bH ^{{\cal F},\pi } = diag(\bB_0, \bB_1, . . . , \bB_{e-1})$ of $m\times n$ matrices over GF($q$), one can perform inverse permutation $\pi ^{-1} = (\pi_{ row}^{-1}, \pi _{col}^{-1})$ on the rows and columns of ${\hat \bH}$ to obtain an $m\times n$ array ${\hat \bH}^{\pi ^{-1}}$ of $e\times e$ diagonal matrices $\bA_{i,j}^{\cal F}$.  Next, perform inverse Fourier transform on ${\hat \bH}^{\pi ^{-1}}$, i.e., replacing each diagonal matrix $\bA_{i,j}^{\cal F}$ in ${\hat \bH}^{\pi ^{-1}}$ by an $e\times e$ circulant whose first row is the inverse Fourier transform of the diagonal vector of the diagonal matrix $\bA_{i,j}^{\cal F}$. This results in an $m\times n$ array ${\hat \bH}^{\pi ^{-1},{\cal F}^{-1}} = \bH$  of $e\times e$ circulants over GF(2). Thus, we have a  \emph{one-to-one} correspondence between an array of circulants over GF(2) and a diagonal array of matrices over GF($q$).

   The transformation from $\bH $ to $\bH ^{{\cal F},\pi }$ preserves the rank of the matrices. Let $rank(\bM)$ denote the rank of a matrix $\bM $ over a finite field.  Since $\bH ^{\cal F,\pi}  = diag(\bB_0,\bB_1, . . . ,\bB_{e-1})$, then
\begin{equation}\label{brank}
                    rank(\bH) = rank(\bB_{ 0} ) + rank(\bB) + rank(\bB_{ 2}) + \cdots + rank(\bB_{(e -1)}),
                    \end{equation}
In a latter section, we develop a recursive algorithm for computing the rank of $ \bH $, $rank(\bH )$, based on the conjugacy constraint on matrices, $\bB_0, \bB_1, . . . , \bB_{e-1}$, given by (13).

\subsection{Characterization of Binary QC Codes in Fourier Transform Domain}

   Consider a binary QC code ${\cal C}_{qc}$ given by the null space of an $m\times n$ array $\bH  = [\bA_{i,j}]$ $0 \leq  i < m$, $0 \leq  j < n$ of $e\times e$  circulant matrices $\bA_{i,j}$ over GF(2).  $\bH$  is an $me\times ne$ matrix over GF(2).  The one-to-one correspondence between arrays $\bH  = [\bA_{i,j} ]$ of circulant matrices and diagonal arrays $\bH ^{{\cal F},\pi } = diag(\bB_0, \bB _1, . . . , \bB _{e-1})$ of matrices and the conjugacy constraint on the matrices $\bB _0, \bB _1, . . . , \bB _{e-1}$ on the diagonal of $\bH ^{{\cal F},\pi }$ give the basis for studying QC codes in Fourier transform domain.

   Partition the set ${\cal E} = \{0, 1, . . . , e -1\}$ of integers into  \emph{cyclotomic cosets} of $2$ modulo $e $ [25], [30] where $e = 2^r - 1$.  Let $t $ be an integer in $\cal E$. The  \emph{cyclotomic coset} containing $t $ is
\begin{equation}
                                                         {\bf Z}_t = \{t, (2 t)_e, (2^2t)_e, . . . , (2^{c_t -1}t) _e \},
\end{equation}
where $c_t$ is the smallest positive integer satisfying $2^{c_t} t \equiv t \mod e$. Each coset has a smallest member which we call the  \emph{coset representative}. The conjugacy constraint given by (4) constrains the components of the Fourier transform of a binary vector whose indices are in the same cyclotomic coset. All these components are  \emph{powers} of the component whose index is the coset representative.

  It follows from the conjugacy constraint on the matrices $\bB _0, \bB _1, . . . , \bB _{e-1}$ in the diagonal array $\bH ^{{\cal F},\pi }$ given by (13) that all matrices $\bB_ t$ whose indices are in the same cyclotomic coset are determined by the matrix whose index is the coset representative. The matrices with indices in the same cyclotomic coset modulo $e$ are conjugate matrices which form a  \emph{conjugate class}.  Given one matrix in a conjugate class, we can determine all the other conjugate matrices in the same class. Consequently, the binary parity-check array $\bH$  is determined by the matrices $\bB _t$'s for which the $t$'s are coset representatives of all the distinct cyclotomic cosets. In particular, $\bH$  is specified by a number of matrices $\bB _t$ equal to the number of cyclotomic cosets of 2 modulo $e$. Therefore, the construction of an $m\times n$ array of $e\times e$ circulants over GF(2) consists of the following steps:

 \begin{enumerate}
 \item   Determine the cyclotomic cosets of 2 modulo $e$.  Let ${\bf Z}_0, {\bf Z}_1, . . . , {\bf Z}_{\lambda -1}$ be all the cyclotomic cosets modulo of 2  modulo $e$, where ${\bf Z}_0 = \{0\} $ and $\lambda$  is the number of cyclotomic cosets.  Let $t_0 = 0, t_1, . . . , t_{\lambda -1}$ the coset    representatives of ${\bf Z}_0, {\bf Z}_1, . . . , {\bf Z}_{\lambda -1}$.

\item   Choose $\lambda$  $m\times n$ matrices $\bB_{ t_0}, \bB _{t_1}, . . . , \bB _{t_{\lambda -1}}$ over GF($q$) with $q = 2^r$.

\item   For each $\bB_{ t_i}$, $0 \leq  i < \lambda $, we form all its conjugate matrices. This gives $e $ matrices $\bB _0, \bB _1, . . . , \bB _{e-1}$ of size $m\times n$.

\item  Form the $e\times e$ diagonal array $\bH ^{{\cal F},\pi } = diag(\bB _0, \bB _1, . . . , \bB _{e-1})$.

\item    Performing inverse permutations $\pi _{row}^{-1}$ and $\pi _{col}^{-1}$ on the rows and columns of the array $\bH ^{{\cal F},\pi }$ (as an
     $  me\times me $ matrix over GF($q$)), we obtain an $m\times n$ array $\bH ^{\cal {\cal F}}$ of $e\times e$ diagonal matrices over GF($q$).

\item    Performing the inverse Fourier transform ${\cal F}^{-1}$ on the array $\bH ^{\cal {\cal F}}$, we obtain an array $m\times n$ array of circulant over GF(2).

 \end{enumerate}

The null space of $\bH$  gives a QC code ${\cal C}_{qc}$.  Therefore, the construction of a binary QC code is determined by the choice of the base matrices $\bB_{ t_0}, \bB_{ t_1}, . . . , \bB _{t_{\lambda -1}}$.

   If $\bH$  is an array of sparse circulants over GF(2), then the null space of $\bH$  gives a QC-LDPC code ${\cal C}_{qc}$.  As an $me\times ne$ matrix, if $\bH$  satisfies the RC-constraint, then the Tanner graph of the QC-LDPC code ${\cal C}_{qc}$ given by the null space of $\bH$  has a girth at least 6.  If $\bH$  is a regular matrix with column weight $\gamma $, then the minimum distance of ${\cal C}_{qc}$ is at least $\gamma  + 1$.

\section{ Characterization of a Class of Binary RC-Constrained QC-LDPC Codes in Fourier Transform Domain}

    Typically, in most constructions of parity-check matrices of QC-LDPC codes, each circulant is either a zero matrix (ZM) or a  \emph{circulant permutation matrix} (CPM), i.e., a circulant with exactly one non-zero entry in each row and each column and this entry is 1.

   If $\bH$  is an $m\times n$ array of CPMs and/or zero matrices (ZMs) of size $e\times e$, the conjugacy constraint of (13) becomes the following constraint [25]:
\begin{equation}
                                            \bB _t = \bB _1 ^{\circ t}                   
\end{equation}
for $0 \leq  t < e$, i.e., $\bB_t$ is the $t$-th Hadamard power of $\bB_1 $.   In this case, the array $\bH ^{{\cal F},\pi }$ given by (12) is uniquely specified by the matrix $\bB _1$. As a result, we could remove the subscript ``1'' from $\bB _1$ and use $\bB$  for $\bB _1$.  Then, the array $\bH ^{{\cal F},\pi }$ has the following form [25]:
\begin{equation}
\begin{array}{lll}
            \bH ^{{\cal F},\pi} & = &diag(\bB ^{\circ 0}, \bB ^{\circ 1}, . . . , \bB ^{\circ (2^r-2)} )\\
                   &= & \left[\begin{array}{ccccc}
                                \bB ^{\circ}&  \bO &        \bO &       .  .  .&        \bO \\
                                   \bO &   \bB ^{\circ 1} &      \bO &     .  .  .  &      \bO \\
                                   \vdots&               &                   &         \ddots & \vdots\\
                                  \bO  &   \bO  & \bO &          . . . &  \bB ^{\circ (2^r-2)}
                                  \end{array}\right].
\end{array}
\end{equation}
   The result given by (17) actually says that, in the Fourier transform domain, any array $\bH $ of CPMs and/or ZMs is completely specified by a matrix $\bB$  over GF($q$) with $q = 2^r$.  Any $m\times n$ matrix over GF($q$) can be used as the $\bB$  matrix.

   To contruct a QC-LDPC code, we begin with an appropriately chosen $m\times n$ matrix $ \bB$  over a finite field GF($q$).  Form an $e\times e$ diagonal array ${\hat \bH} = diag(\bB ^{\circ 0}, \bB ^{\circ 1}, . . . , \bB ^{\circ (2^r-2)})$ of the form (17).  Next, we apply the permutation $\pi ^{-1} = (\pi_{ row}^{-1}, \pi _{col}^{-1}) $ on the rows and columns of $ {\hat \bH}$ to obtain an $m\times n$ array ${\hat \bH}^{\pi ^{-1}}= [\bA_{i,j}^{\cal {\cal F}}]$  of diagonal matrices $\bA_{i,j}^{\cal {\cal F}}$ of size of $e\times e$.  Then, we take the inverse Fourier transform of ${\hat \bH}^{\pi ^{-1}}$ to obtain an $m\times n$ array ${\hat \bH}^{\pi ^{-1},{\cal F}^{-1}} = \bH  = [\bA_{i,j}]$ $0 \leq  i < m, 0 \leq  j < n$ of CPMs and/or ZMs of size $e\times e$.  $\bH$  is an $me\times ne$ matrix over GF(2).  For $r \geq  3$, $\bH$  is a low-density matrix. The null space of $\bH$  gives a QC-LDPC code ${\cal C}_{qc}$.  Since the array $\bH$  is constructed from $\bB $, we call $\bB  $ the  \emph{base matrix} for code construction.

   If the base matrix $\bB$  satisfies the condition given by the following theorem, then the parity-check matrix $\bH $ of the QC-LDPC code ${\cal C}_{qc}$ satisfies the RC-constraint and its Tanner graph has a girth at least 6.  We will state the theorem without a proof.  A  proof can be found in [25].

\begin{theorem}A  necessary and sufficient condition for an array $\bH $ of CPMs and/or ZMs to satisfy the RC-constraint is that every $2\times 2$ submatrix in the base matrix $\bB$  contains at least one zero entry or is non-singular.
\end{theorem}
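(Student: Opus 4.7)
The plan is to reduce the global RC-constraint on the $me\times ne$ binary matrix $\bH$ to a local statement about every $2\times 2$ sub-array of blocks, and then translate that local statement into a single algebraic condition on the corresponding $2\times 2$ submatrix of the base matrix $\bB$.

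First, I would parameterize. Every nonzero block can be written as $\bA_{i,j}=\bP^{k_{i,j}}$ where $\bP$ is the basic $e\times e$ cyclic-shift matrix and $0\le k_{i,j}<e$. Applying (1) to the generator of $\bP^{k_{i,j}}$ gives the diagonal of $\bA_{i,j}^{\cal F}$ as $(1,\alpha^{k_{i,j}},\alpha^{2k_{i,j}},\ldots)$; tracking this through the permutation $\pi$ shows that the $(i,j)$-entry of the base matrix $\bB=\bB_1$ equals $\alpha^{k_{i,j}}$ when $\bA_{i,j}$ is a CPM and equals $0$ when $\bA_{i,j}$ is a ZM.

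Next, I would identify precisely when two rows of $\bH$ share a~$1$. Indexing rows of $\bH$ by pairs $(i,s)$ with $0\le i<m$ and $0\le s<e$, the $1$ contributed by a CPM block $\bA_{i,j}=\bP^{k_{i,j}}$ to row $(i,s)$ sits in column $(j,(s+k_{i,j})\bmod e)$, while ZM blocks contribute nothing. Hence rows $(i_1,s_1)$ and $(i_2,s_2)$ share a $1$ in block-column $j$ iff $\bA_{i_1,j}$ and $\bA_{i_2,j}$ are both CPMs and $k_{i_2,j}-k_{i_1,j}\equiv s_1-s_2\pmod e$. When $i_1=i_2$ but $s_1\ne s_2$, no block-column can satisfy this, so rows inside a single block row automatically obey the RC-constraint. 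For $i_1\ne i_2$, a double coincidence in columns $j_1\ne j_2$ happens iff all four blocks $\bA_{i_1,j_1},\bA_{i_1,j_2},\bA_{i_2,j_1},\bA_{i_2,j_2}$ are CPMs and $k_{i_1,j_1}+k_{i_2,j_2}\equiv k_{i_1,j_2}+k_{i_2,j_1}\pmod e$. In view of the first step and the characteristic-two arithmetic of GF($2^r$), this is exactly $b_{i_1,j_1}b_{i_2,j_2}+b_{i_1,j_2}b_{i_2,j_1}=0$, i.e., singularity of the $2\times 2$ submatrix of $\bB$ formed by rows $i_1,i_2$ and columns $j_1,j_2$, all four of whose entries are nonzero.

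Combining these observations, the row RC-constraint on $\bH$ fails iff some $2\times 2$ submatrix of $\bB$ has all four entries nonzero and is singular, which is exactly the negation of the stated condition. The column RC-constraint reduces to the same condition, either by running the analogous argument with rows and columns swapped (using $(\bP^{k})^{T}=\bP^{e-k}$) or simply by noting that ``contains a zero entry or is non-singular'' is manifestly invariant under transposition. The main obstacle will be the bookkeeping in the second step: verifying that a double row-coincidence is controlled by exactly one $2\times 2$ sub-array and, conversely, that any violation of the algebraic identity can be realized by an explicit choice of $(s_1,s_2)$. Beyond this and the identification $b_{i,j}=\alpha^{k_{i,j}}$, no further machinery is needed.
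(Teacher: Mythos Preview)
Your proposal is correct. The paper itself does not prove Theorem~1; it explicitly states the theorem without proof and refers the reader to [25]. So there is no in-paper argument to compare against.

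That said, the argument you sketch is the standard direct proof and matches what the paper implicitly relies on: the identification $b_{i,j}=\alpha^{k_{i,j}}$ (which the paper spells out in (18) and the surrounding paragraph), together with the observation that a CPM $\bP^{k}$ places the $1$ of row $s$ in column $(s+k)\bmod e$. The reduction of a double row-coincidence between block rows $i_1\neq i_2$ to the congruence $k_{i_1,j_1}+k_{i_2,j_2}\equiv k_{i_1,j_2}+k_{i_2,j_1}\pmod e$, and then to vanishing of the $2\times 2$ determinant $b_{i_1,j_1}b_{i_2,j_2}+b_{i_1,j_2}b_{i_2,j_1}$, is exactly right; the converse direction (producing an explicit $(s_1,s_2)$ from a singular all-nonzero $2\times 2$ block) is also handled correctly by fixing $s_2=0$ and taking $s_1\equiv k_{i_2,j_1}-k_{i_1,j_1}\pmod e$. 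The symmetry remark for the column RC-constraint is fine. No gaps.
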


The necessary and sufficient condition on a base matrix given in Theorem 1 is called the $2\times 2$ submatrix (SM)-constraint. $\bA$  base matrix $\bB$  that satisfies the $2\times 2$ SM-constraint is called a $2\times 2$ SM-constrained base matrix.

   Next, we show that construction of an RC-constrained LDPC matrix of a QC-LDPC code which consists of an array of CPMs and/or ZMs can be carried out directly from a $2\times 2$ SM-constrained base matrix $\bB$  without forming the array ${\hat \bH}= diag(\bB ^{\circ 0}, \bB ^{\circ 1}, . . . , \bB ^{\circ (e -1)})$, taking the inverse row and column permutations and the inverse Fourier transform.

   Consider an $e\times e$ CPM $\bA  = circ(a_0, a_1, . . . , a_{e-1})$ over GF(2) with generator $(a_0, a_1, . . . , a_{e-1})$ which contains a single 1-component.  Suppose the single 1-component of $(a_0, a_1, . . . , a_{e-1})$ is at the $k$th position, i.e., $a_k= 1$ and $ a_t = 0$ for $t \neq  k$.   It follows from (1) and (2) that the diagonal vector $(b_0, b_1, . . . , b_{2^r-2})$ of the Fourier transform $\bA ^{\cal {\cal F}}$ of $\bA$  is
\begin{equation}
                   (b_0,b_1, . . . , b_{2^r-2}) = (\alpha ^0, \alpha ^k, \alpha ^{2k}, . . . , \alpha ^{(e-1)k}),          
\end{equation}
which consists of $e$ consecutive powers of $\alpha ^k$. Conversely, if a diagonal matrix over GF($q$) with diagonal vector $(\alpha ^0, \alpha ^k, \alpha ^{2k}, . . . , \alpha ^{(e-1)k})$, then its inverse Fourier transform is an $e\times e$ CPM whose generator has its single 1-component at the position $k$.

   Let $\bB  = [b_{i,j}]$, $0 \leq  i < m, 0 \leq  j < n$, be the chosen base matrix for code construction. Construct the diagonal array $\bH ^{{\cal F},\pi}  = diag(\bB ^{\circ 0}, \bB ^{\circ 1}, . . . , \bB ^{\circ (e-1)})$ given by (17), where $\bB ^{\circ t} = [b_{i,j}^t]$, $0 \leq  i < m$, $0 \leq  j < n$, for $0 \leq  t <e$. Applying the permutation $\pi ^{-1} = (\pi _{row}^{-1}, \pi _{col}^{-1})$ on the rows and columns of $\bH ^{{\cal F},\pi}$, we obtain the array $\bH ^{\cal {\cal F}} = [\bA_{i,j}^{\cal {\cal F}}]$  of diagonal matrices $\bA_{i,j}^{\cal {\cal F}}$ .  For $0 \leq  i < m$, $0 \leq  j < n$, we find that the diagonal vector of $\bA_{i,j}^{\cal {\cal F}}$ is $(1, b_{i,j}, b_{i,j}^2,. . . , b_{i,j}^{e-1})$.  If $b_{i,j} = \alpha ^k$, then $ \bA_{i,j}$, the inverse Fourier transform of $\bA_{i,j}^{\cal {\cal F}}$, is an $e\times e$ CPM whose generator $(a_0, a_1, . . . , a_{e-1})$ has its single 1-component at the position $k$.

   Based on the above analysis, construction of an RC-constrained low-density parity-check array $\bH$  of CPMs and/or ZMs can be constructed directly from a chosen base matrix $\bB  = [b_{i,j}]$, $0 \leq  i < m$, $0 \leq  j < n$ which is $2\times 2$ SM-constrained.  This is carried out as follows: (1) if $b_{i,j}$ is an nonzero element in GF($2^r$) and $b_{i,j} = \alpha ^k$ with $0 \leq  k < e=2^r - 1$, then we replace $b_{i,j}$ by an $e\times e$ CPM whose generator has its single 1-component at position $k$; and (2) if $b_{i,j} = 0$, then we replace $b_{i,j}$ by an $e\times e$ ZM.  This gives the RC-constrained array $\bH$  of CPMs and/or ZMs corresponding to the chosen base matrix $\bB$  that satisfies the $2\times 2$ SM-constraint. Then, the null space of $\bH$  gives an RC-constrained QC-LDPC code whose Tanner graph has a girth at least 6. The above replacement of an entry $b_{i,j}$ in a base matrix $\bB$  by either an $e\times e$ CPM or an $e\times e$ ZM is referred to as the  \emph{$e$-fold matrix dispersion} of $b_{i,j}$.  The array $\bH$  is called the  \emph{$e$-fold array dispersion} of $\bB$  [20].

   It is clear that the transpose $\bB ^{\sf T}$ of a $2\times 2$ SM-constrained base matrix $\bB$  also satisfies the $2\times 2$ SM-constraint and hence it can be used as a base matrix to form an RC-constrained array of CPMs and/or ZMs whose null space gives an RC-constrained QC-LDPC code.  If $\bH$  is an RC-constrained array of CPMs and/or ZMs constructed from $\bB $, then the RC-constrained array constructed from $\bB ^{\sf T}$ is the transpose $\bH ^{\sf T}$ of $\bH $.

   The above construction puts all the constructions of QC-LDPC codes based on finite fields given in [17], [19], [21]-[24], [25] under a single framework. In these papers, the base matrices are constructed based on finite fields and combinatorial designs.

   Consider an RC-constrained QC-LDPC code ${\cal C}_{qc}$ given by an $m\times n$ array $\bH  = [\bA_{i,j}]$ of $e\times e$ CPM's and/or ZM's which is specified by a $2\times 2$ SM-constrained $m\times n$ base matrix $\bB  = [ b_{i,j}]$.  If $b_{i,j} \neq  0$, $0 \leq  i < m$, $0 \leq  j < n$, then multiplying it by zero results in replacing the CPM $\bA_{i,j}$ by a zero matrix. This procedure, known as  \emph{masking}, was used in previous work to optimize the column and row weights of the parity-check matrices and to reduce the number of short cycles in the Tanner graphs of the constructed codes [20]. This is accomplished by judiciously designing an $m\times n$ binary matrix ${\bf Z} = [z_{i,j}]$, $0 \leq  i < m$, $0 \leq  j < n$, which we call a  \emph{masking matrix}. After masking, we obtain the masked base matrix $\bB _{mask} = {\bf Z}\circ \bB  = [z_{i,j}b_{i,j} ]$, whose $(i, j)$ entry equals $b_{i,j}$ if $z_{i,j} = 1$ and equals zero if $z_{i,j} = 0$.  Performing $e$-fold matrix dispersion of each entry in the masked base matrix $\bB _{mask} $, we obtain a masked array, denoted by $\bH_{mask} $.  The null space of the masked array $\bH_{mask}$ gives a new RC-constrained QC-LDPC code. Masking is an effective technique for construction both regular and irregular QC-LDPC codes [20].

\section{Characterization of a Class of Non-Binary RC-Constrained QC-LDPC Codes in the Fourier Transform Domain}

   In this section, we show that RC-constrained arrays of non-binary CPMs of a special type can also be constructed using the base matrices constructed in Section II.  The null spaces of these arrays give a class of non-binary QC-LDPC codes.

   Again we consider code construction based on fields of characteristic of 2.  Let $\alpha$  be a primitive element of GF($q$) with $q = 2^r$.  Again, let $e = 2^r - 1$. For $0 \leq  k < e$, let $\bP(\alpha ^k)$ be an $e\times e$ matrix over GF($q$) with columns and rows labeled from 0 to $e - 1$ which has the following structures: (1) the top row of $\bP(\alpha ^k)$ has a single nonzero component with value $\alpha ^k$ at the $k$-th position; and (2) every row of $\bP(\alpha ^k)$ is the cyclic-shift (one place to the right) of the row above it multiplied by $\alpha$  and the first row is the cyclic-shift of the last row multiplied by $\alpha $.  This $e\times e$ matrix $\bP(\alpha ^k)$ over GF($2^r$)  is called an $\alpha $-multiplied CPM [31].  There are $e $ such $\alpha $-multiplied CPMs.  For $0 \leq  k < e$, we represent the element $\alpha ^k$ of GF($q$) by the $\alpha $-multiplied CPM $\bP(\alpha ^k)$.  This representation is one-to-one. $\bP(\alpha ^k)$ is referred to as the $\alpha $-multiplied CPM dispersion (or simply dispersion) of $\alpha ^k$.

   Next, we replace each nonzero entry (a power of $\alpha$) of a chosen $2\times 2$ SM-constrained $m\times n$ base matrix $\bB$  by its corresponding $\alpha $-multiplied $e\times e$ CPM and a zero entry by an $e\times e$ ZM.  This results in an $m\times n$ RC-constrained array $\bH _{\alpha}$  of $\alpha $-multiplied CPMs of size $e\times e$ over GF($q$).  It is an $me\times me$ matrix over GF($q$).  Its null space gives a $q$-ary RC-constrained QC-LDPC code whose Tanner graph has a girth of at least 6.

   The array $\bH _{\alpha}$  consists of $n$ column blocks of $\alpha $-multiplied CPMs, denoted $\bH _{\alpha} ^{(0)}, \bH _\alpha ^{(1)}, . . . , \bH _\alpha ^{(n-1)}$. Each column block $\bH _\alpha ^{(j)}$ of $\alpha $-multiplied CPMs with $0 \leq  j < n$ is an $me\times e$ matrix over GF($q$).  Due to the structure of an $\alpha $-multiplied CPM, all the nonzero elements in the $k$-th column of $\bH _\alpha ^{(j)}$ are $\alpha ^k$ for  $0 \leq  k < e$ and $0 \leq  j < n$.  We call $\alpha ^k$ the value of the $k$-th column of $\bH _\alpha ^{(j)}$.  View the overall array $\bH _\alpha$  as an $me\times me$ matrix over GF($q$). If we multiply each column of $\bH _\alpha$  by the multiplicative inverse of its value, we obtain the binary array $\bH$  of CPMs constructed from the base matrix $\bB$  as given in Section III.  Therefore, the rank of $\bH _\alpha$  is the same as that of $\bH $, i.e.,
\begin{equation}
\begin{array}{ccl}
                         rank(\bH _\alpha ) &= &rank(\bH )\\ 
&=& rank(\bB^{\circ 0} ) + rank(\bB) + rank(\bB^{\circ 2}) + \cdots + rank(\bB^{\circ(e -1)}).
\end{array}        
            \end{equation}
   Masking can also be performed on the base matrix $ \bB$  to construct regular or irregular non-binary QC-LDPC codes using $\alpha $-multiplied CPM dispersion of each nonzero entry in the masked base matrix $\bB _{mask}$.

\section{Rank and Row Redundancy Analysis}

   In this section, we give a general analysis of the rank and row redundancy of the parity-check matrix of a QC-LDPC code which is an array of CPMs and/or ZMs in the Fourier Transform domain.  The row redundancy of a matrix is defined as the ratio of the number of redundant rows (or dependent rows) of the matrix to the total number of rows of the matrix.  For an algebraic LDPC code, large row redundancy speeds up the rate of decoding convergence, i.e., requiring smaller number of iterations for the decoder to converge to codeword than other types of LDPC codes.

\subsection{Rank Analysis}

   Consider a binary QC-LDPC code ${\cal C}_{qc}$ given by the null space an $m\times n$ array $\bH  = [\bA _{i,j}]$ ,$ 0 \leq  i < m$, $0 \leq  j < n$, of $e\times e$ CPM's and/or ZMs which is the array dispersion of an $m\times n$ base matrix $\bB$  over GF($q$) where $q = 2^r$ and $e  = 2^r - 1$. The rank of $\bH$  is given by (19). If we can determine the rank of each Hadamard power of the base matrix $\bB $, then we can determine the rank of the parity-check matrix.

   
If $\bH$ is an array  of circulants and ZMs over GF(2), based on the conjugacy constraint specified by (13), it can be readily proved by induction that for any integers $t>0$,
\begin{equation}\label{induc}     
\bB_{(2^it)_e} = \bB_t^{\circ 2^i},
\end{equation}
From (\ref{induc}) we can group the $e $ matrices $\bB_0, \bB_1, . . . , \bB_{e-1}$  into conjugacy classes.  Let $\lambda$  be the number of distinct conjugacy classes and $\Psi _0, \Psi _1, . . . , \Psi _{\lambda -1}$ denote these classes, where $\Psi _0 $ contains only the matrix $\bB _0$ and $\Psi _1$ contains $\bB_1$  and its conjugate matrices.  For $ 0 \leq  i <\lambda $, let $c_i $ be the number of matrices in the conjugacy class $\Psi _i$, where $c_i$ is the smallest nonnegative integer such that $(2^{c_i} t)_e = t$.  Suppose $\bB_{t_i}$ is member matrix in $\Psi _i$, then it follows from (\ref{induc}) that
\[ 
\Psi_i = \{\bB_{t_i}, \bB_{t_i}^{\circ 2}, . . . , \bB_{t_i}^{\circ 2^{c_i-1}}\}=\{\bB_{(t_i)_e},\bB_{(2t_i)_e},...,\bB_{(2^{c_i-1}t_i)_e}\}.
\]
The subscripts of the conjugate matrices in $\Psi _i$ actually form the \emph{cyclotomic coset} ${\bf Z}_i = \{t_i, 2t_i, . . . , 2^{c_i-1} t_i \}$  modulo $e$.  It is clear that for $i = 0$, we have $t_i = 0$ and $c_i = 1$. The matrix in $\Psi _i$ with the smallest power is called the representative of the conjugacy class $\Psi _i$. The following theorem shows that matrices in the same conjugacy class have the same rank.   

\begin{theorem}
Let $\bB$  be an $m\times n$ matrix over GF($q$). For any nonnegative integer $t \leq  r$, the matrix $\bB ^{\circ 2^t} $ (the $2^t$-th Hadamard power of $\bB$) has the same rank as $\bB$.
\end{theorem}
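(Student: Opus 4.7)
The plan is to reduce the claim to the observation that entry-wise squaring is a field automorphism of $\mathrm{GF}(2^r)$, and that field automorphisms preserve rank.

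First I would note that in characteristic $2$, the map $\phi_t: \mathrm{GF}(2^r) \to \mathrm{GF}(2^r)$ defined by $\phi_t(x) = x^{2^t}$ is the $t$-fold iterate of the Frobenius automorphism; it is therefore a (bijective) field automorphism, satisfying $\phi_t(a+b) = \phi_t(a) + \phi_t(b)$ and $\phi_t(ab) = \phi_t(a)\phi_t(b)$ for all $a,b \in \mathrm{GF}(2^r)$. The second step is to recognize that the $2^t$-th Hadamard power $\bB^{\circ 2^t}$ is exactly the image of $\bB$ under applying $\phi_t$ to every entry; i.e., writing $\phi_t(\bB)$ for entry-wise application, $\bB^{\circ 2^t} = \phi_t(\bB)$.

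The third step is to transfer linear dependences. Label the rows of $\bB$ as $\br_0, \br_1, \ldots, \br_{m-1}$ and the rows of $\bB^{\circ 2^t}$ as $\br_0', \br_1', \ldots, \br_{m-1}'$, where $\br_i' = \phi_t(\br_i)$. Suppose $\sum_{i} \lambda_i \br_i = \mathbf{0}$ for scalars $\lambda_i \in \mathrm{GF}(q)$. Applying $\phi_t$ entry-wise to this identity and using that $\phi_t$ is a homomorphism on sums and products gives $\sum_{i} \phi_t(\lambda_i)\, \phi_t(\br_i) = \mathbf{0}$, i.e.\ $\sum_i \lambda_i^{2^t}\, \br_i' = \mathbf{0}$. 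Hence every linear dependence among the rows of $\bB$ with coefficients $(\lambda_i)$ yields a linear dependence among the rows of $\bB^{\circ 2^t}$ with coefficients $(\lambda_i^{2^t})$. Conversely, applying $\phi_t^{-1} = \phi_{r-t \bmod r}$ (which is also an automorphism) to a dependence among the rows of $\bB^{\circ 2^t}$ produces a dependence among the rows of $\bB$. So the sets of indices of linearly independent rows in $\bB$ and in $\bB^{\circ 2^t}$ coincide, and therefore $\mathrm{rank}(\bB^{\circ 2^t}) = \mathrm{rank}(\bB)$.

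There is no real obstacle here; the only subtlety worth flagging in the write-up is that one must use that $\phi_t$ is an \emph{automorphism} (hence bijective) and not merely a homomorphism, so that the correspondence between dependence relations is two-way. Also, nothing in the argument actually requires $t \le r$; since $\phi_r = \mathrm{id}$, the conclusion is periodic in $t$ modulo $r$, and the stated bound $t \le r$ is a convenient range that covers all distinct Hadamard powers appearing in a single conjugacy class.
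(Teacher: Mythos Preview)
Your proof is correct and follows essentially the same approach as the paper: both arguments exploit that $x \mapsto x^{2^t}$ is additive and multiplicative in characteristic $2$ to transfer linear (in)dependence relations among rows, and both obtain the reverse inequality by applying the $2^{r-t}$-th Hadamard power. Your framing in terms of the Frobenius automorphism is a bit more conceptual, but the mechanics are identical to the paper's direct computation.
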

 \begin{proof}
Let $\mu$  be the rank of $\bB $. Let $\bb_{i_1}, \bb_{i_2}, . . . , \bb_{i_\mu}$  be a set of linearly independent rows of $\bB$  where $0 \leq  i_1, i_2, . . . , i_\mu < m$.  Let $a_{i_1}, a_{i_2}, . . . , a_{i_\mu}$  be any set  of $\mu$  elements in GF($q$), not all zero.  Then 
\[
                        \sum\limits^{\mu}_{l=1} a_{i_l} \bb_{i_l}   \neq  0.
\]
Raising the above sum vector to the power $2^t$, since the characteristic of the field GF($q$) is 2, we have
\begin{eqnarray}
                       (  \sum\limits^{\mu}_{l=1} a_{i_l} \bb_{i_l}  )^{\circ 2^t}  &=&                        \sum\limits^{\mu}_{l=1} (   a_{i_l} \bb_{i_l}  )^{\circ 2^t}  \nonumber \\
&=& \sum\limits^{\mu}_{l=1}a_{i_l}^{\circ 2^t}   (\bb_{i_l})^{\circ 2^t}    \neq 0.
\end{eqnarray}
The vectors $(\bb_{i_1})^{\circ 2^t},  (\bb_{i_2})^{\circ 2^t} , . . . , (\bb_{i_\mu} )^{\circ 2^t}$  are $\mu$  rows in the matrix $\bB ^{\circ 2^t}$, the $2^t$-th Hadamard power of $\bB $.  The expression of (20) implies that $(\bb_{i_1})^{\circ 2^t},  (\bb_{i_2})^{\circ 2^t}, . . . , (\bb_{i_\mu})^{\circ 2^t}$  are linearly independent.  This implies that 
\begin{equation} \label{geq}
                                                 rank(\bB ^{\circ 2^t})  \geq   rank(\bB ).
\end{equation}
   Notice that $(\bB ^{\circ 2^t})^{\circ  2^{r-t}} = \bB $.  Let $\mu '$ be the rank of $\bB ^{\circ 2^t}$ and $(\bb_{i_1})^{\circ 2^t},  (\bb_{i_2})^{\circ 2^t }, . . . , (\bb_{i_{\mu '}})^{\circ 2^t}$  be the independent rows of $\bB ^{\circ 2^t}$.  In a similar way, we can show that  $\bb_{i_1}, \bb_{i_2}, . . . , \bb_{i_{\mu'}}$  are linear independent rows of $\bB $.  This implies that
\begin{equation} \label{leq}
  rank(\bB )\geq  rank(\bB ^{\circ 2^{t}}) .   
\end{equation}
The inequalities of (\ref{geq}) and (\ref{leq}) imply that $rank(\bB ^{\circ 2^t})  = rank(\bB )$. 
\end{proof}

It is clear that for $i = 0$, we have $t_i = 0$ and $c_i = 1$. The matrix in $\Psi _i$ with the smallest power is called the representative of the conjugate class $\Psi _i$. For $0 \leq  i < \lambda $, let $\mu _i$ be rank of the matrices in the conjugate class $\Psi _i$, then, it follows from (19) and Theorem 2 that the rank of the parity check matrix $\bH$  of a QC-LDPC code obtained by array dispersion of a base matrix $\bB$  is given by
\begin{equation}
                       rank(\bH ) = \mu _0 + c_1\mu _1 + . . . + c_{\kappa -1}\mu _{\lambda -1}.
\end{equation}
Note that $\mu _0$ and $\mu _1$ are the ranks of $\bB _0 = \bB ^{\circ 0}$ and $\bB  = \bB ^{\circ 1}$, respectively. If we know $c_0, c_1, . . . , c_{\lambda -1}$ and $\mu _0, \mu _1, . . . , \mu _{\lambda -1}$, then we can compute the rank of $\bH$  from (24).  This can be done by first partitioning the set $\{\bB ^{\circ 0}, \bB , \bB ^{\circ 2}, . . . , \bB ^{\circ (e-1)} \}$ into $\lambda$  conjugate classes $\Psi _0, \Psi _1, . . . , \Psi_{ \lambda -1}$ and determining the rank of the conjugate matrices in each class. Then, use (24) to compute the rank of $\bH $.  This can be carried out systematically.

   As pointed above, the powers of matrices in a conjugate class $\Psi _i = \{\bB ^{\circ t_i}, \bB ^{\circ 2t_i}, . . . , \bB ^{\circ 2^{c_i-1} t_i}\}$ form a cyclotomic coset ${\bf Z}_i = \{t_i, 2t_i, . . . , 2^{c_i-1}t_i\}$ of 2 modulo $e$.   Therefore, to find the conjugate class $\Psi _i$ is equivalent to find the powers of the matrices in $\Psi _i$.  In the following, we present a recursive construction of the cyclotomic cosets, ${\bf Z}_0, {\bf Z}_1, . . . , {\bf Z}_{\lambda -1}$ of 2 modulo $e$. In the construction, the first elememt $t_i $ of each cyclotomic coset ${\bf Z}_i$ is always the smallest integer, the coset representative. In this case, the matrix $\bB^{\circ t_i}$ is the representative of the conjugate class $\Psi _i$.  We begin with the cyclotomic class $\Psi _0 = \{0\}$ which contains only the integer 0.  Suppose we have completed the construction of the $i$th cyclotomic coset $\Psi _i$ (i.e., $\Psi _0, \Psi _1, . . . , \Psi _{i-1}$ have been constructed).  To construct the $i$th  cyclotomic coset $\Psi _i$, we choose the smallest integer $t_i$ in the set ${\cal E} = \{0, 1, . . . , e -1\}$ but not in any of cyclotomic cosets ${\bf Z}_0, {\bf Z}_1, . . . , {\bf Z}_{i-1}$. With $t_i $ as the first element of the cyclotomic coset ${\bf Z}_i$, we form ${\bf Z}_i = \{t_i, 2t_i, . . . , 2^{c_i -1}t_i \}$. It is clear that $t_i$ is the smallest integer in ${\bf Z}_i$ and hence the representative of ${\bf Z}_i$. Continue the above construction process until we form all the cyclotomic cosets of 2 modulo $e$.  It is clear that $ t_i  - 1$ must be in one of the cyclotomic cosets, ${\bf Z}_0, {\bf Z}_1, . . . , {\bf Z}_i$.

   In the following, we give an upper bound on the rank of $\bH $.  First, we need the following theorem which was proved in [32].

\begin{theorem}
 Let $\bM_1$ and $\bM_2$ be two $n\times n$ matrices over GF($q$) with ranks $rank(\bM_1)$ and $rank(\bM_2)$, respectively, and $\bM_1\circ \bM_2$ be the Hadamard product of $\bM_1 $ and $\bM_2$.  Then the rank of $\bM_1\circ \bM_2$, denoted by $rank(\bM_1\circ \bM_2)$, satisfies the following inequality:
\end{theorem}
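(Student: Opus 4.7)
The statement as displayed is truncated, but the standard bound on the rank of a Hadamard product that one expects here is $\operatorname{rank}(\bM_1\circ\bM_2)\le \operatorname{rank}(\bM_1)\cdot \operatorname{rank}(\bM_2)$, and this is what the proof should target. The plan is to reduce to the rank-one case via an outer-product decomposition of each factor, then use the fact that the Hadamard product of two rank-one matrices is rank-one.

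First I would set $r_1=\operatorname{rank}(\bM_1)$ and $r_2=\operatorname{rank}(\bM_2)$ and write each matrix in a rank-revealing sum of rank-one outer products, $\bM_1=\sum_{i=1}^{r_1}\bu_i\bv_i^{\sf T}$ and $\bM_2=\sum_{j=1}^{r_2}\bx_j\by_j^{\sf T}$, which is possible over any field (e.g., take the $\bu_i$ to be a basis of the column space and read off the $\bv_i$ from the corresponding coordinate representation, and likewise for $\bM_2$). The key algebraic identity I would then verify, by computing the $(k,\ell)$ entries of both sides, is
\begin{equation*}
(\bu\bv^{\sf T})\circ(\bx\by^{\sf T}) \;=\; (\bu\circ\bx)(\bv\circ\by)^{\sf T},
\end{equation*}
so that the Hadamard product of two rank-one matrices is again rank-one.

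Next I would exploit bilinearity of the Hadamard product with respect to ordinary matrix addition to expand
\begin{equation*}
\bM_1\circ\bM_2 \;=\; \sum_{i=1}^{r_1}\sum_{j=1}^{r_2}\,(\bu_i\circ\bx_j)(\bv_i\circ\by_j)^{\sf T},
\end{equation*}
which exhibits $\bM_1\circ\bM_2$ as a sum of at most $r_1 r_2$ rank-one matrices. By subadditivity of the rank under matrix addition (each summand contributes at most rank one to the sum), this immediately yields $\operatorname{rank}(\bM_1\circ\bM_2)\le r_1 r_2 = \operatorname{rank}(\bM_1)\cdot\operatorname{rank}(\bM_2)$.

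The proof is essentially routine once the rank-one identity is in hand; there is no serious obstacle, since the decomposition into outer products and the subadditivity of rank hold over any field, in particular over $\mathrm{GF}(q)$. The only point to be careful about is that the identity above is an algebraic manipulation of entries that does not rely on the characteristic of the field, so the bound applies uniformly to all matrices over $\mathrm{GF}(q)$ considered in this paper. This bound is then exactly what is needed to combine with Theorem~2 and the cyclotomic-coset decomposition to push the rank analysis of $\bH$ forward in the following sections.
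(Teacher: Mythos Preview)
Your proof is correct and is the standard outer-product argument for the rank bound on Hadamard products. The paper itself does not supply a proof of this theorem; it simply states the result and cites reference~[32] (Beezer's linear algebra text), so there is no in-paper argument to compare against. Your approach---decomposing each factor into a sum of rank-one outer products, using the identity $(\bu\bv^{\sf T})\circ(\bx\by^{\sf T})=(\bu\circ\bx)(\bv\circ\by)^{\sf T}$, and invoking subadditivity of rank---is exactly the classical proof and works over any field, so it fits the paper's setting over $\GF(q)$ without modification.
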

\begin{equation}
                       rank(\bM_1\circ \bM_2) \leq  rank(\bM_1) \times rank(\bM_2).    
                       \end{equation}
 For the matrices  $\bM_1$ and $\bM_2$ which are not square matrices, we construct square matrices  $\hat{\bM}_1$ and $\hat{\bM}_2$ by adding extra zero rows or columns. Clearly, adding or deleting zero rows or columns does not affect rank. Then, we have 
 \[
 \begin{array}{ll}
 rank(\bM_1 \circ \bM_2)& = rank(\hat{\bM}_1 \circ \hat{\bM}_2)  \\
   & \leq rank(\hat{\bM}_1)\times rank(\hat{\bM}_2)= rank(\bM_1)\times rank(\bM_2).
 \end{array}
 \]

   Consider the class $\Psi _i = \{\bB ^{\circ t_i}, \bB ^{\circ 2t_i}, . . . , \bB ^{\circ 2^{c_i -1} t_i}  \}$ of conjugate matrices. It follows from our construction of the cyclotomic cosets that $t_i$ is the smallest integer in cyclotomic coset ${\bf Z}_i$. Hence, $\bB ^{\circ t_i}$ is the representative matrix of the conjugate class $\Psi_i$. If $t_i - 1$ is contained in the ${i^*}$th cyclotomic coset ${\bf Z}_{{i^*}}$ with ${i^*} < i$, then
\begin{equation}
                                  \bB ^{\circ t_i} = \bB \circ \bB ^{\circ (t_i-1)}.  
\end{equation}
Since $t_i -1$ is an integer in the cyclotomic coset ${\bf Z}_{{i^*}}$, $\bB ^{\circ (t_i-1)}$ is a member matrix in the conjugate class $\Psi _{{i^*}}$ with representative $\bB ^{\circ t_{i^*}}$ where $t_{i^*}$ is the representative in the cyclotomic coset ${\bf Z}_{i^*}$.  Since $i^* < i$ and $t_{i^*} \leq  t_i -1$, then we must have $t_{i^*} < t_i$.  Since the rank of the matrices in conjugate class $\Psi _{i^*}$  is $\mu _{i^*}$, then $rank(\bB ^{\circ (t_i-1)}) = \mu_{ i^*}$. It follows from Theorem 3 that the rank of $\bB ^{\circ t_i}$  is upper bounded by $\mu _1\times \mu _{i^*}$, i.e.,
\begin{equation}
                                 rank(\bB ^{\circ t_i} ) \leq  \mu _1\times \mu _{i^*}.                
\end{equation}
Since $\bB ^{\circ t_i}$ is an $m\times n$ matrix, the rank of $\bB ^{\circ t_i}$ must be upper bounded by $ \min\{m,n\}$, i.e.,
\begin{equation}
                                       rank(\bB ^{\circ t_i} ) \leq  \min\{m,n\}.
\end{equation}
It follows from (24), (27) and (28) that we have the following  upper bound on  $rank(\bB ^{\circ t_i} )$:
\begin{equation}
                                 rank(\bB ^{\circ t_i} ) \leq  \min\{m, n, \mu _1\times \mu _{i^*} \}.
\end{equation}
It follows from (24) and (29) that we obtain an upper bound on the rank of the parity-check matrix $\bH$  which is given by the next theorem.

\begin{theorem}
Let $\bH$  be an $m\times n$ array of CPM's and ZM's over GF(2) of size $e\times e$ given by the $e$-fold dispersion of an $m\times n$ matrix over GF($q$) with $q = 2^r$ and $e = 2^r - 1$.  Let $  \mu _0$ and $\mu _1$ be the ranks of $\bB _0 = \bB ^{\circ 0}$ and $\bB $, respectively. Then, the following gives an upper bound on the rank of $\bH $,
 \begin{equation}
                               rank(\bH ) \leq  \mu _0 +  \sum\limits^{\kappa-1}_{i=1} c_i \times \min\{m, n, \mu _1\mu _{i^*} \}.    
\end{equation}
\end{theorem}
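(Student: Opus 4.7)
The plan is to start from the exact rank formula (24), which decomposes $rank(\bH)$ as $\mu_0 + \sum_{i=1}^{\lambda-1} c_i \mu_i$ across the $\lambda$ distinct conjugacy classes $\Psi_0, \Psi_1, \ldots, \Psi_{\lambda-1}$, and then to replace each $\mu_i$ with $i \geq 1$ by a uniform upper bound of the form $\min\{m, n, \mu_1 \mu_{i^*}\}$. Substitution will then yield (30) directly.

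For a given $i \geq 1$, I would work with the representative matrix $\bB^{\circ t_i}$ of the class $\Psi_i$ and invoke the trivial factorization $\bB^{\circ t_i} = \bB \circ \bB^{\circ (t_i-1)}$. The key structural observation, which is forced by the recursive construction of the cyclotomic cosets described just before the theorem, is that $t_i - 1$ must already lie in one of the previously constructed cosets $\bZ_0, \bZ_1, \ldots, \bZ_{i-1}$; otherwise $t_i - 1$, being a smaller nonnegative integer than $t_i$ (note $t_i \geq 1$ for $i \geq 1$), would have been selected as the representative of $\Psi_i$ instead of $t_i$. Let $i^*$ denote the index of that earlier coset, so that $i^* < i$. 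By Theorem 2, every member of a conjugacy class has the same rank, so $rank(\bB^{\circ (t_i - 1)}) = \mu_{i^*}$.

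Now I would apply Theorem 3 to the Hadamard product $\bB \circ \bB^{\circ (t_i-1)}$ to obtain $rank(\bB^{\circ t_i}) \leq rank(\bB) \cdot rank(\bB^{\circ (t_i-1)}) = \mu_1 \mu_{i^*}$. Coupling this with the trivial dimension bound $rank(\bB^{\circ t_i}) \leq \min\{m,n\}$ gives $\mu_i \leq \min\{m, n, \mu_1 \mu_{i^*}\}$. Substituting this estimate into (24) completes the argument.

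The main delicacy is justifying the existence of an admissible index $i^*$ with $i^* < i$; this is essentially a minimality argument built on the recursive coset-construction procedure, and it is the step on which the whole bound hinges. A secondary care point is that Theorem 3 as originally stated is for square matrices, while $\bB$ is in general $m \times n$; here I would explicitly appeal to the zero-padding extension given in the paragraph immediately following Theorem 3, which shows that ranks are preserved under appending zero rows or columns, so the inequality of Theorem 3 transfers verbatim to the non-square setting.
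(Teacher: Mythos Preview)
Your proposal is correct and follows essentially the same route as the paper: the paper too derives (30) by combining the decomposition (24) with the factorization $\bB^{\circ t_i} = \bB \circ \bB^{\circ(t_i-1)}$ (its (26)), the membership of $t_i-1$ in an earlier coset $\bZ_{i^*}$, Theorem~2 to identify $rank(\bB^{\circ(t_i-1)})=\mu_{i^*}$, Theorem~3 (plus the zero-padding remark for non-square matrices) to get (27), and the trivial size bound (28). Your minimality justification for $i^* < i$ is in fact a bit more explicit than the paper's ``it is clear'' clause, but otherwise the arguments coincide.
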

   To construct the cyclotomic cosets systematically, we fill a table with $\kappa$  rows and five columns shown in Table 1. The first column gives the indices of the rows. The entries in the $i$-th row of the third column are the integers in the cyclotomic coset ${\bf Z}_i$ with its representative put in the first position.  The first row of the third column gives the cyclotomic coset  $\Psi _0 = \{0\}$.  The $i$th entry of the fourth column of the table gives the upper bound $\mu _1\mu _{i^*}$ on the rank $\mu _i$ of matrices in the conjugate class $ \Psi _i$ whose Hadamard powers are integers in the cyclotomic coset ${\bf Z}_i = \{t_i, 2t_i, . . . , 2^{(c_i-1)} t_i\}$ given in the $i$th row of the third column. The entry in the $i$th column of the fifth column is the true rank $\mu _i$ of the matrices in $i$th conjugate class $\Psi _i$. The $i$th entry in the second column of the table is the row index $i^*$ for which the cyclotomic coset ${\bf Z}_{i^*}$ contains the integer $t_i  - 1$.  The first entry of the second column is set to 0.

\begin{table}
	\centering
		\begin{tabular}{c|c|c|c|c}
		\hline
			 Row index  &  Row index  &   Cyclotomic coset    &   Upper bound  &  True rank\\
			 \hline
			  $i$         &           $i^* $&            ${\bf Z}_i$ &         $\mu_1\times \mu_{i^*}$  &    $\mu_i$\\
			  \hline
		    0         &        0          &                0               &     $\min\{m,n\}$&  \\
		    \hline
		    1         &         0         &                                &             $\mu_1$           &     $ \mu_1$\\
		    \hline
		     & & \vdots& &\\
		     \hline
		     $\kappa-1$ &  &  & &\\
		      \hline	 
		\end{tabular}
	\caption{Recursive construction of cyclotomic cosets}
	\label{tab:Re}
\end{table}                       
    Once the table is formed, we have all the information of the number of integers in each cyclotomic coset and the rank of each Hadamard power of the base matrix $\bB $.  From these information and using (24), we can compute the rank of the parity-check matrix $\bH $.

   The upper bound on the rank of $\bH$  is very tight as will be shown by examples given in later sections.  If the base matrix $\bB$  has special structures, a combinatorial expression for $rank(\bH )$ can be derived.

   Next we derive an upper bound on $rank(\bH )$ which only depends on the size and the ranks of $\bB _0$ and $\bB $.  Consider the $t$-th Hadamard power $\bB ^{\circ t}$ of the base matrix $\bB$  for $0 < t < e$.  Let $\tau (t)$ denote the number 1's in  the binary representation of $t$ and $f_0, f_1 . . . , f_{ \tau (t)}$ denote the position of these 1's.  Then
\begin{equation}
                                t =   \sum\limits^{ \tau (t) - 1}_{i=0}  2^{f _i }
 \end{equation}
The number $\tau (t)$ is called the weight of the integer $t$. Since $0 < t < e = 2^r - 1$, we have $1 \leq  \tau (t) < r$. Using the above binary representation of $t$, the $t$-th Hadamard power $\bB ^{\circ t}$ of the base matrix $\bB$  can be expressed as the following Hadamard product of $\bB ^{\circ 2^{f _0}}, \bB ^{\circ 2^{f _1}}, . . . , \bB ^{\circ 2^{f_{ \tau (t) - 1}}}$:
\begin{equation}
                           \bB ^{\circ t} =  \prod\limits^{\tau (t) - 1}_{i=0} {\circ}\bB ^{\circ 2^{f _i}}                                                      \end{equation}
Since each term in the product of (32) is a conjugate matrix of the base matrix $\bB $, they are all in the same conjugate class with $\bB$  as the representative and hence they have the same rank $\mu _1$ as $\bB $.  It follows from Theorem 3 and (32) that the rank of $\bB ^{\circ t}$ is upper bounded as follows:
\begin{equation}
                          rank(\bB ^{\circ t}) \leq  \mu _1^{\tau (t)}.
\end{equation}
  Since for $0 < i < r$, there are ${r\choose i}$ nonzero integers less than $e = 2^r - 1$ with weight $i$.   Then it follows from (19) and (33) that we derive the following theorem that gives an upper bound on the rank of an array of CPMs and/or ZMs of size $e\times e$.

\begin{theorem}
Let $\bH$  be an $m\times n$ array of CPM's and ZM's over GF(2) of size $e\times e$ given by the $e$-fold dispersion of an $m\times n$ matrix over GF($q$) with $q = 2^r$ and $e = 2^r - 1$.  Let $\mu _0$ and $\mu _1$ be the ranks of $\bB _0 = \bB ^{\circ 0}$ and $\bB $, respectively. Then,  the rank of $\bH$  is upper bounded as follows:
 \begin{equation}\label{tightbound}
                               rank(\bH ) \leq  \mu_0 + \sum\limits^{r-1}_{i=1} {r\choose i}\min\{m, n, \mu _1^i \}.              
                                   \end{equation}
\end{theorem}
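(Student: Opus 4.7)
The plan is to combine equation~(19), which expresses $rank(\bH)$ as the sum $\sum_{t=0}^{e-1} rank(\bB^{\circ t})$, with the already-established inequality~(33), $rank(\bB^{\circ t}) \leq \mu_1^{\tau(t)}$, and then regroup the sum by the Hamming weight $\tau(t)$ of the exponent. The bound $\min\{m,n\}$ on $rank(\bB^{\circ t})$ is trivial because $\bB^{\circ t}$ is an $m\times n$ matrix, so each summand satisfies $rank(\bB^{\circ t}) \leq \min\{m, n, \mu_1^{\tau(t)}\}$.

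First, I would isolate the $t=0$ contribution, which is exactly $\mu_0$, leaving the sum $\sum_{t=1}^{e-1} rank(\bB^{\circ t})$. For $1 \leq t \leq e-1 = 2^r - 2$, the $r$-bit binary representation of $t$ has weight $\tau(t)$ satisfying $1 \leq \tau(t) \leq r-1$ (the weight cannot be $0$ since $t \geq 1$, and cannot be $r$ since $t \leq 2^r - 2 < 2^r - 1$). Next, I would partition the index set $\{1, 2, \ldots, e-1\}$ according to the value of $\tau(t)$: for each $i$ with $1 \leq i \leq r-1$, the number of integers in $\{1,\ldots,2^r-2\}$ with Hamming weight exactly $i$ is $\binom{r}{i}$, since we are simply choosing which $i$ of the $r$ bit positions are $1$.

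Applying $rank(\bB^{\circ t}) \leq \min\{m, n, \mu_1^{\tau(t)}\}$ to every term and grouping by weight gives
\begin{equation*}
\sum_{t=1}^{e-1} rank(\bB^{\circ t}) \;\leq\; \sum_{i=1}^{r-1}\binom{r}{i}\min\{m, n, \mu_1^{i}\},
\end{equation*}
and adding back the $t=0$ term yields the claimed bound~(\ref{tightbound}). The only genuinely non-trivial input is inequality~(33), whose derivation (via the binary expansion of $t$ together with Theorem~2 and the repeated application of Theorem~3) has already been carried out in the paragraphs just preceding the theorem statement. I do not foresee a real obstacle: the remaining work is the combinatorial bookkeeping above, and the main care needed is to verify that the exponent weights for $t$ in the summation range are exactly $1, \ldots, r-1$ and that the all-ones exponent $t = 2^r - 1$ does not appear in~(19), which ensures the upper summation limit is $r-1$ rather than $r$.
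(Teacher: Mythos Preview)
Your proposal is correct and follows essentially the same route as the paper: the paper derives inequality~(33) via the binary expansion of $t$, Theorem~2, and Theorem~3, then states that for $0<i<r$ there are $\binom{r}{i}$ nonzero integers less than $e=2^r-1$ of weight $i$, and combines this with (19) to obtain the bound. Your explicit verification that $\tau(t)\in\{1,\ldots,r-1\}$ for $1\le t\le 2^r-2$ makes rigorous a point the paper only states implicitly.
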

The upper bound given by (\ref{tightbound}) depends only on the choice of the base matrix $\bB $.  For several class of $2\times 2$ SM-constrained base matrices, this bound is very tight.  This will shown in latter sections on code construction.

%

\subsection{Row Redundancy }

   A very  \emph{important structure} of the geometrically and algebraically constructed cyclic or QC-LDPC codes is that their parity-check matrices have large  \emph{row redundancies}, i.e., the parity-check matrix of a code has a large number of dependent rows.  Extensive simulation results show that large row redundancy and large minimum distance make the iterative decoding of an LDPC code to converge at a very fast rate and provide a very low error-floor. Here we would like to show the impact of row-redundancy by an simple example of the $(255,175)$ EG-LDPC code given in \cite{KLF01}. From Figure 1, it is clear that the performance of the SPA decoding algorithm  of the EG-LDPC code improves as the row-redundancy of $\bH$ increases. The performance of $\bH$ with all row-redundancy 0.6863 is about 0.6 dB better than the performance of $\bH$ with no  row-redundancy 0 at \emph{bit error rate} (BER) $10^{-5}$.  
  
\begin{figure}
 \centering
\includegraphics[width = 3in, height=2in]{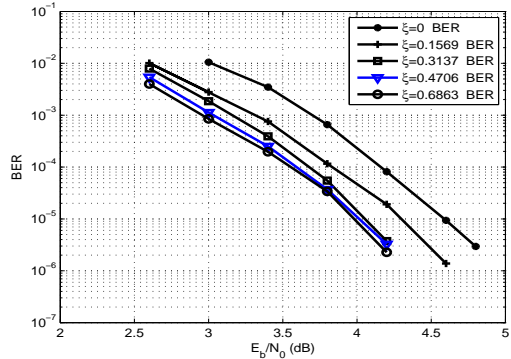}
\vspace{-0.2cm}
\caption{The performance of the $(255,175)$ EG-LDPC code given in \cite{KLF01}.}
\vspace{-0.5cm}
\label{example6}
\end{figure}

   It is interesting that such regular algebraic LDPC codes are generally better than regular LDPC codes designed by pseudo-random matrices [16], [19]. However, irregular algebraic LDPC codes perform close to irregular pseudo-random LDPC codes. We call two Tanner graphs \emph{similar} if they have the same girth and degree-distributions. It is well known that the same message-passing algorithm [12] performs closely under two similar Tanner graphs at the same \emph{signal power per code bit to noise ratio} (SNR$_{cb}$). Without loss of generality, suppose that the parity-check matrix of the first code has redundancy and the second does not. Thus, the null space of the first one defines a higher-rate code than the null space of the second one. Therefore, the first code requires lower \emph{signal power per information bit to noise ratio} (SNR) than the second code to achieve the same performance. In another word, the redundancy will improve the performance of LDPC codes. Most regular algebraic LDPC codes have redundancies, but most irregular algebraic LDPC codes have no redundancies. Thus, we see this interesting phenomenon from simulations.

\begin{definition}
Let $\bH$  be an $m\times n$ matrix over GF(2). Let $rank(\bH )$ denote the rank of $\bH $.  The row redundancy $\bH$  is defined as the ratio $\xi  = (m - rank(\bH ))/m$.  If $\bH$  has full rank, i.e., $rank(\bH ) = m$, then its row redundancy is zero. 
\end{definition}

It follows from the above definition that $m - rank(\bH ) $ is simply the number of redundant rows (or dependent rows) of $\bH $.

   The parity-check matrix $\bH$  obtained by $e$-fold array dispersion of a $2\times 2$ SM-constrained base matrix $\bB$  is, in general, not full rank and in fact, has a large number of redundant rows. Using the upper bound on the rank of the parity-check matrix $\bH$  given by (34), we obtain the following lower bound on the redundant rows of $\bH $, denoted $R(\bH )$:
\begin{equation}                     
                                R(\bH ) \leq  me - \mu _0 - \sum\limits^{r-1}_{i=1}{r \choose i} \times \min\{m, n, \mu_1^{i}\}.
\end{equation}
  RC-constrained parity-check matrices of QC-LDPC codes with large row redundancies have been reported in [20]-[25]. These parity-check matrices are all constructed by array dispersions of $2\times 2$ SM-constrained base matrices using finite fields.  They are arrays of CPMs and/or ZMs.  In the next section, several classes of these base matrices will be briefly described.

Most high-rate QC-LDPC codes in applications have small $m$, since code rate is lower bounded by $1 - m/n$. Thus, $\mu_1^{\tau(t)}$ is larger than $\min(m, n)$ with a small $\tau(t)$. Thus, we consider all $\bB^{\circ t} $'s have full rank, except for $t=0$ and $t=1, 2, 2^2,\cdots, 2^{c_1-1}$. Recall that $c_t$ is the smallest positive integer satisfying $t 2^{c_t}\equiv t \mbox{mod} e$.  Since $e=2^r-1$, $c_1=r$. In this case, $\bB_0$ and $\bB^{\circ \tau(t)}$ with small $\tau(t)$ are the key factors to design parity-check matrices with redundancies. Without loss of generality, suppose $m\leq n$. Then, we give the next two Corollaries.

\begin{corollary} \label{binary_rank}
If the parity-check matrix $\bH$ consisting of circulants and zero matrices of size $e$ is binary and the rank of its base matrix $rank(\bB)=\mu_1$, then the redundant rows of $\bH$ is at least $r (m-\mu_1)$. 
\end{corollary}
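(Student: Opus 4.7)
The plan is to apply the rank formula from (\ref{brank}) together with Theorem 2, and then use the simplifying assumption stated just before the corollary (namely that $\bB^{\circ t}$ has full rank for all $t$ outside of $\{0, 1, 2, 2^2, \ldots, 2^{c_1-1}\}$) to get an explicit upper bound on $rank(\bH)$. Translating that upper bound into a lower bound on $me - rank(\bH)$ will give the desired count of redundant rows.

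First I would partition the $e$ matrices $\bB_0,\bB_1,\ldots,\bB_{e-1}$ on the diagonal of $\bH^{\mathcal F,\pi}$ into conjugacy classes. The class $\Psi_0$ consists of $\bB_0$ alone, so it contributes $\mu_0 = rank(\bB_0)$. Since $e = 2^r-1$, we have $2^r \equiv 1 \mod e$, so the smallest positive integer with $2^{c_1}\cdot 1 \equiv 1 \mod e$ is $c_1 = r$. Hence $\Psi_1 = \{\bB,\bB^{\circ 2},\bB^{\circ 2^2},\ldots,\bB^{\circ 2^{r-1}}\}$ contains exactly $r$ matrices, and by Theorem 2 each of them has the same rank $\mu_1$ as $\bB$. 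The remaining $e-1-r$ matrices lie in the other conjugacy classes $\Psi_2,\ldots,\Psi_{\lambda-1}$; by the working assumption each of these matrices has full rank $m$ (recall $m \le n$).

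Plugging these three contributions into (\ref{brank}) gives
\[
rank(\bH) \le \mu_0 + r\mu_1 + (e-1-r)\,m.
\]
Subtracting from $me$, the number of redundant rows satisfies
\[
me - rank(\bH) \ge me - \mu_0 - r\mu_1 - (e-1-r)m = (r+1)m - \mu_0 - r\mu_1 = r(m-\mu_1) + (m-\mu_0).
\]
Since $\bB_0$ is an $m\times n$ matrix with $m \le n$, we have $\mu_0 \le m$, so $m - \mu_0 \ge 0$. Dropping this nonnegative term yields $me - rank(\bH) \ge r(m-\mu_1)$, which is precisely the claim.

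There is no real obstacle here; the step that requires a moment of care is verifying that $|\Psi_1| = r$ (which rests on $e = 2^r-1$ forcing $c_1 = r$), and that the conjugacy-class decomposition exactly accounts for all $e$ diagonal blocks so that the counting $1 + r + (e-1-r) = e$ is consistent. Everything else is arithmetic on the rank formula and the stated assumption on the full-rank classes.
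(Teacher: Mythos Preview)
Your proof is correct and follows essentially the same route as the paper. The paper invokes the bound (35) (equivalently Theorem~5) and then bounds the $i=1$ term by $r\mu_1$ and all remaining terms by $m$; you obtain the identical inequality $rank(\bH)\le \mu_0 + r\mu_1 + (e-1-r)m$ by working directly from (\ref{brank}) and the conjugacy-class decomposition, noting that the indices $t$ with $\tau(t)=1$ are exactly the $r$ members of $\Psi_1$. One small remark: the ``working assumption'' is not actually needed for the inequality you write, since $rank(\bB^{\circ t})\le m$ holds unconditionally when $m\le n$; both your argument and the paper's go through as genuine bounds without it.
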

 \beginofproof
 Considering all $\bB^{\circ t} $'s have full row-rank, except for $t=1, 2, 2^2,\cdots, 2^{r-1}$, from (35), we have 
\[
\begin{array}{lll}
R(\bH)&\geq& me - \mu_0 - \sum\limits^{r-1}_{i=1} {r \choose i} \min\{m, n,  \mu_1^{i}\}, \\
      &\geq& (m-\mu_1)r.
\end{array}
\]

\endofproof

\begin{corollary} \label{binary_rank_4}
If the parity-check matrix $\bH$ consisting of circulants and zero matrices of size $e=2^r-1$ is binary and its base matrix does not have zero entry, then its rank $rank(\bH)$ is at most $m(e-1)-c_1 (m-r_1)+1$. 
\end{corollary}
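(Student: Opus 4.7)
The plan is to apply the rank decomposition formula (24) together with the observation that an all-ones base matrix produces a very low-rank Hadamard power at $t=0$. Specifically, the hypothesis that $\bB$ has no zero entry means every entry $b_{i,j}$ is nonzero, so by the convention $b_{i,j}^0 = 1$ adopted in the paper, the zeroth Hadamard power $\bB_0 = \bB^{\circ 0}$ is the $m \times n$ all-ones matrix, which has rank exactly $1$. Thus $\mu_0 = 1$.

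Next, I would identify the two ``cheap'' contributions to $\text{rank}(\bH)$ in (24). The conjugacy class $\Psi_0 = \{\bB_0\}$ has $c_0 = 1$ and contributes $\mu_0 = 1$. The class $\Psi_1$ containing $\bB = \bB^{\circ 1}$ has size $c_1$; since $e = 2^r - 1$, in fact $c_1 = r$, and by Theorem~2 every member of $\Psi_1$ has rank $\mu_1$, so this class contributes $c_1 \mu_1$. For every remaining class $\Psi_i$ with $2 \leq i < \lambda$, I will simply use the trivial rank bound $\mu_i \leq \min\{m,n\} = m$ (using the assumption $m \leq n$ carried over from the discussion preceding the corollary), giving a contribution of at most $c_i m$.

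Combining these via formula (24), and noting $\sum_{i=0}^{\lambda-1} c_i = e$ so that $\sum_{i=2}^{\lambda-1} c_i = e - 1 - c_1$, I obtain
\begin{equation*}
\text{rank}(\bH) \;\leq\; 1 + c_1 \mu_1 + (e - 1 - c_1)\, m \;=\; m(e-1) - c_1 (m - \mu_1) + 1,
\end{equation*}
which is the claimed bound (with $r_1$ read as $\mu_1$, matching the notation of Corollary~1).

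The proof is short because all the heavy lifting has already been done: (24) decomposes the rank across conjugacy classes, and Theorem~2 guarantees all members of a class share a rank. The only genuinely new observation is the first one, namely that the zero-entry-free hypothesis forces $\bB^{\circ 0}$ to be the all-ones matrix rather than a $\{0,1\}$-matrix of potentially higher rank; this is what replaces the generic term $\mu_0$ by the sharp value $1$ and produces the ``$+1$'' in the bound. No real obstacle arises, though some care is needed to make explicit the standing assumption $m \leq n$ that the paragraph preceding the corollary invokes, so that $\min\{m,n\} = m$ can be used as the fallback rank bound on the higher Hadamard powers.
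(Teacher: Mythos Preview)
Your proposal is correct and follows essentially the same approach as the paper: observe that the zero-entry-free hypothesis forces $\bB^{\circ 0}$ to be all ones so $\mu_0=1$, then bound the class containing $\bB$ by $c_1\mu_1$ and all remaining Hadamard powers trivially by $m$. The paper phrases the last step as the redundancy bound $R(\bH)\geq r(m-\mu_1)+m-1$ (building on Corollary~\ref{binary_rank}), while you work directly with the rank identity (24), but the arithmetic and the underlying idea are identical.
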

 \beginofproof
If all entries of $\bB$ are nonzero, then $\bB_0$ is all 1's. Thus, $\mu_0=1$. Then $R(\bH)\geq r(m-\mu_1)+m-1$.
\endofproof

The above two Corollaries give us two guidelines to design parity-check matrices with redundancies. \emph{First, the base matrix $\bB$ should have rank as small as possible. Second, it should contain few zero entries.}

Similarly, from (19), we can give a bound for the non-binary QC-LDPC codes whose parity-check matrix is an array of $\alpha$-multiplied CPMs and zero matrices. 
\begin{corollary} \label{nonbinary_rank}
If the parity-check matrix $\bH_{\alpha}$ consists of an array of $\alpha$-multiplied CPMs and zero matrices of size $e=2^r-1$, then there are at least $r(m-\mu_1)$ redundant rows. 
\end{corollary}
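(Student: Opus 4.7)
My plan is to reduce the non-binary case to the binary case already handled in Corollary \ref{binary_rank}. The key fact that makes this reduction immediate is equation (19), established in Section IV: for any base matrix $\bB$ over GF$(q)$, if $\bH_\alpha$ is the $\alpha$-multiplied CPM dispersion of $\bB$ and $\bH$ is the binary CPM/ZM dispersion of the same $\bB$, then
\[
rank(\bH_\alpha) = rank(\bH) = \sum_{t=0}^{e-1} rank(\bB^{\circ t}).
\]
Since $\bH_\alpha$ and $\bH$ have the same number of rows, namely $me$, the number of redundant rows is the same for both matrices. Thus $R(\bH_\alpha) = me - rank(\bH_\alpha) = me - rank(\bH) = R(\bH)$.

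From here, I would simply invoke Corollary \ref{binary_rank} applied to $\bH$ with the same $\mu_1 = rank(\bB)$ to conclude
\[
R(\bH_\alpha) = R(\bH) \geq r(m - \mu_1),
\]
which is the desired lower bound. Alternatively, if one prefers a self-contained derivation, the same inequality can be obtained by combining equation (19) directly with Theorem 2 (which guarantees that each of the $r$ matrices $\bB^{\circ 2^i}$, $0 \leq i < r$, in the conjugacy class $\Psi_1$ has rank $\mu_1$) and the trivial bound $rank(\bB^{\circ t}) \leq \min(m,n) \leq m$ for every other $t \in \{2,3,\dots,e-1\}$; summing gives $rank(\bH_\alpha) \leq \mu_0 + r\mu_1 + (e-1-r)m \leq m + r\mu_1 + (e-1-r)m$, so $R(\bH_\alpha) \geq me - m(e-r) - r\mu_1 = r(m - \mu_1)$.

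There is no substantive obstacle here; the proof is essentially a one-step consequence of the rank-preserving identity (19) combined with the previously established Corollary \ref{binary_rank}. The only thing worth double-checking is that (19) applies verbatim for \emph{every} $2\times 2$ SM-constrained base matrix $\bB$ used to form $\bH_\alpha$, which is exactly the content of the discussion preceding (19) in Section IV (column-scaling each column of $\bH_\alpha$ by the inverse of its value yields the binary array $\bH$, and column operations preserve rank). Once this is noted, the corollary follows immediately.
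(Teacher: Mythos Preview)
Your proposal is correct and follows essentially the same approach as the paper: the paper simply states ``Similarly, from (19), we can give a bound for the non-binary QC-LDPC codes,'' i.e., it reduces the claim to the binary case via the rank identity $rank(\bH_\alpha)=rank(\bH)$ and then invokes the argument of Corollary~\ref{binary_rank}. Your write-up makes this reduction explicit (and your alternative self-contained count is also valid), so nothing further is needed.
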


\section{Rank Enumerations of Two Well Known Classes of RC-Constrained Low-Density  Arrays of CPMs}
 
   In  this section, we consider two well known classes of $2\times 2$ SM-constrained base matrices.  From these two classes of base matrices, two classes of RC-constrained arrays of CPMs and/or ZMs can be constructed.  The null spaces of the arrays in these two classes give two classes of RC-constrained QC-LDPC codes. 
 
\subsection{Latin Squares}
 
   A Latin square of order $n$ is an $q\times q$ array for which each row and each column contains every element of a set of $n$ elements exactly once [29].  For any given field GF($q$), there is a $q\times q$ Latin square whose entries are elements of the field.  In a recent paper [23], it was proved that a Latin square over the field GF($q$) satisfies the $2\times 2$ SM-constrained and hence can be used as a base matrix for constructing a $q\times q$ RC-constrained array $\bH_{lat}$ of CPMs and ZMs of size $(q - 1)\times (q - 1)$.  The null space of any subarray of $\bH_{lat}$ gives an RC-constrained QC-LDPC code.  Using the recursive algorithm and bounds developed in Section V, we can enumerate the rank of any subarray of $\bH_{lat}$.

   Consider the field GF($q$) where $q=2^r$.  Let $\alpha$ be a primitive element of GF($q$).  Then, $\alpha ^{-\infty}  = 0$, $\alpha ^0 = 1$,  $\alpha , \alpha ^2, . . . , \alpha ^{q-2}$ give all the elements of GF($q$).  The following $q\times q$ matrix over GF($q$) gives a Latin square of order $q$:
\begin{equation}     
 \bB _{lat} =\left[\begin{array}{ccccc}                             
                              1 - 1    &            1- \alpha        &  . . .&       1 - \alpha ^{q-2}  &        1 - 0\\
                              \alpha  - 1        &        \alpha  - \alpha      &   . . .  &     \alpha  - \alpha ^{q-2} &             \alpha  - 0\\
                                  \vdots &          & \ddots&& \vdots\\
                         \alpha ^{q-2} - 1 &    \alpha ^{q-2} - \alpha &  . . .&   \alpha ^{q-2} - \alpha ^{q-2}&      \alpha ^{q-2} - 0\\
                               0 -1   &             0 - \alpha       &   . . .  &    0 - \alpha ^{q-2}&              0 - 0
\end{array}\right].
\end{equation}
The entries on the main diagonal of $\bB_{ lat}$ are the 0-element of GF($q$).  This matrix satisfies the $2\times 2$ SM-constrained and hence can be used as a base matrix for code construction.

For $1 \leq  m, n \leq  q$, let $\bB_{lat}(m,n)$ be an $m \times n$ submatrix of $\bB_{lat}$, taken from the upper-left corner of $\bB_{lat}$. Let $\bH_{lat}(m,n)$ be an $m \times n$ subarray of $\bH_{lat}$, obtained by the $e$-fold dispersion of $\bB_{lat}(m,n)$.  $\bH_{lat}(m,n)$ is an $me \times ne$ matrix over GF(2). Then, the null space of $\bH_{lat}(m,n)$ gives a QC-LDPC code of length $ne$.  If $\bH_{lat}(m,n)$ does not contain any ZM of $\bH_{lat}$, it has constant column and row weights $m$ and $n$, respectively.  Then the QC-LDPC code given by the null space of $\bH_{lat}(m,n)$ is an ($m$,$n$)-regular QC-LDPC code.  If $\bH_{lat}(m,n)$ constains ZMs of $\bH_{lat}$, then $\bH_{lat}(m,n)$ has two different column weights, $m-1$ and $m$. In this case, the null space of $\bH_{lat}(m,n)$ gives a near-regular QC-LDPC codes.

In [9], a combinatorial expression of the rank of $\bH_{lat}(m,n)$ have been given for the case of $n=q$. In this section, we will generalize this combinatorial expression to make it suitable for the case of either $m \geq \frac{q}{2}$ or $n \geq \frac{q}{2}$. We will also show that in this case, the equality of the upper bound specified in (\ref{tightbound}) holds. Since $\bB_{lat}$ is a symmetric matrix, we only need to give the proof for the case of $n \geq \frac{q}{2}$.
 
Define the index sequence ${\cal A}=(0,1,2,...,q-2,-\infty)$, and let ${\cal A}_m$, ${\cal A}_n$ denote two index sets consisting of the first $m$ and $n$ components of ${\cal A}$, respectively. Then for $1 \leq t <2^r-1$ we have $\bB_{lat}^{\circ t}(m,n)=[(\alpha^i+\alpha^j)^t]_{i \in {\cal A}_m, j \in {\cal A}_n}$. Since the characteristic of GF($q$) is 2, in the binomial expansion of $(\alpha^i+\alpha^j)^t$, only the terms with odd coefficients exist.  Let $\theta_t$ be the number of odd coefficients in the binomial expansion of $(\alpha^i+\alpha^j)^t$ (or the number of odd integers in the $t$-th level of \emph {Pascal triangle}). Let $l_1, l_2, . . . , l_{\theta_t}$ denote the positions of these odd coefficients.  We note that $l_1 = 0$ and $l_{\theta_t} = t$. It is clear that $\theta_t \leq t + 1$. Then 
\begin{equation}\label{binomial1}
(\alpha^i+\alpha^j)^t=\alpha^{it}+\alpha^{i(t-l_2)} \alpha^{jl_2}+\alpha^{i(t-l_3)}\alpha^{jl_3}+...+\alpha^{i(t-l_{\theta_{t-1}})} \alpha^{jl_{\theta_{t-1}}}+\alpha^{jt}
\end{equation}
Let $\gamma_m$ and $\gamma_n$ denote the $m$-th and $n$-th component in the index sequence $\cal A$ repectively. Based on the expression given by (\ref{binomial1}), the $t$-th Hadamard power $\bB_{lat}^{\circ t}(m,n)$ can be expressed as a product of two matrices as follows:
\begin{equation}\label{e44}
\bB_{lat}^{\circ t}(m,n)={\bf V}_{t,L}(m,n){\bf V}_{t,R}(m,n)
\end{equation}
with 
\begin{eqnarray}\label{e66}
{\bf V}_{t,L}(m,n)=\left[\begin{array}{ccccc}
\alpha^0 & \alpha^0 &\alpha^0 & \cdots & 1 \\
\alpha^t & \alpha^{t-l_2}&\alpha^{t-l_3}& \cdots & 1 \\
\alpha^{2t} & \alpha^{2(t-l_2)} &\alpha^{2(t-l_3)}& \cdots & 1 \\
\vdots    & \vdots  & \vdots  & \ddots & \vdots      \\
\alpha^{\gamma_mt}& \alpha^{\gamma_m(t-l_2)}&\alpha^{\gamma_m(t-l_3)}& \cdots & 1 \\
\end{array} \right],\nonumber\\
{\bf V}_{t,R}(m,n)=\left[\begin{array}{ccccc}
1& 1&1& \cdots & 1 \\
\alpha^0 & \alpha^{l_2}&\alpha^{2l_2}& \cdots & \alpha^{\gamma_nl_2} \\
\alpha^0 & \alpha^{l_3}&\alpha^{2l_3}& \cdots & \alpha^{\gamma_nl_3} \\
\vdots    & \vdots  & \vdots  & \ddots & \vdots      \\
\alpha^0 & \alpha^{l_{\theta_t}}&\alpha^{2l_{\theta_t}}& \cdots & \alpha^{\gamma_nl_{\theta_t}} \\
\end{array} \right],
\end{eqnarray}
where ${\bf V}_{t,L}$ is a $m \times \theta_t$ matrix over GF$(2^r)$ and ${\bf V}_{t,R}$ is a $\theta_t \times n$ matrix over GF$(2^r)$.

Let $\omega(t)$ be the number of nonzero terms in the radix-2 expansion(or binary representation) of $\theta_t$, called the radix-2 weight of $\theta_t$.  It follows form Lucas theorem [29] that $\theta_t = 2^{\omega(t)}$. For $0\leq t<2^r-1$, since $\theta(t) \leq t + 1 < 2^r$, we must have $\omega(t)<r$ and $\theta_t \leq 2^{r-1}=\frac{q}{2}$. Since $n \geq \frac{q}{2}$, we conclude that $\theta_t \leq n$.

Based on the structure of the two matrices ${\bf V}_{t,L}(m,n)$ and ${\bf V}_{t,R}(m,n)$, we readily know that both of them can be transformed into Vandermonde Structure by elementary row and column operation. In this case we know that ${\bf V}_{t,R}(m,n)$ has full row rank, i.e. $rank({\bf V}_{t,R}(m,n))=\theta_t$, and  $rank({\bf V}_{t,L}(m,n))=\min\{m,\theta_t\}=\min\{m,2^{\omega(t)}\}$. Thus we have
\begin{equation}\label{e144}
rank(\bB_{lat}^{\circ t}(m,n))=rank({\bf V}_{t,L}(m,n){\bf V}_{t,R}(m,n))=rank({\bf V}_{t,L}(m,n))=\min\{m,2^{\omega(t)}\}
\end{equation}

Then based on (\ref{brank}) it follows that

\begin{eqnarray}\label{e145}
&&rank(\bH_{lat}(m,n))=rank(\bB_{lat}^{\circ 0}(m,n))+\sum_{t=1}^{2^r-2} rank(\bB_{lat}^{\circ t}(m,n)) \nonumber\\
&&=rank(\bB_{lat}^{\circ 0}(m,n))+\sum_{t=1}^{2^r-2} \min\{m,2^{\omega(t)}\} \nonumber\\
&&=\mu_0(m,n)+\sum^{r-1}_{i=1} {r \choose i}\min\{m,2^{i}\}
\end{eqnarray}
where $\mu_0(m,n)=rank(\bB_{lat}^{\circ 0}(m,n)$. On the other hand, it follows from (\ref{tightbound}) that 

$$rank(\bH_{lat}(m,n)) \leq \mu_0(m,n) + \sum\limits^{r-1}_{i=1} {r\choose i}\min\{m, n, (\mu_1(m,n))^i \}$$
where $\mu_1(m,n)=rank(\bB_{lat}(m,n))$. It follows from (\ref{e144}) that $rank(\bB_{lat}(m,n)=min(m,2)=2$ (We never choose $m=1$ for code construction). Thus (\ref{tightbound}) becomes
\begin{equation}\label{e146}
rank(\bH_{lat}(m,n)) \leq \mu_0(m,n) + \sum\limits^{r-1}_{i=1} {r\choose i}\min\{m, n, 2^i \}
\end{equation}

Since $\min\{m,n,2^i\} \leq \min\{n,2^i\}$, comparing (\ref{e145}) and (\ref{e146}) we see that the equality in (\ref{e146}) must hold. Let $\omega_m$ be the largest integer such that $2^{\omega_m} \leq m$, then another combinatorial expression of $rank(\bH_{lat}(m,n))$  can be derived based on (\ref{e145}):

\begin{eqnarray}\label{combinexpress}
&&rank(\bH_{lat}(m,n))=\mu_0(m,n)+\sum^{r-1}_{i=1} {r \choose i}\min\{m,2^{i}\} \nonumber\\
&&=\mu_0(m,n)+\sum_{{1 \leq i \leq r-1} \atop {2^i \leq m}}2^i+\sum_{{1 \leq i \leq r-1} \atop {2^i > m}}m \nonumber\\
&&=\mu_0(m,n)+\sum_{i=1}^{\min\{\omega_m,r-1\}} 2^i + \sum_{i=\omega_m+1}^{r-1}m
\end{eqnarray}
where the second sum term exists only if $\omega_m+1 \leq r-1$. The rank expression in (\ref{combinexpress}) is suitable for the case of $n \geq \frac{q}{2}$. For the case of $m \geq \frac{q}{2}$, based on (\ref{combinexpress}) and the fact that $\bB_{lat}$ is a symmetric matrix we can directly obtain the combinatorial expression of $rank(\bH_{lat}(m,n))$ as follows:

\begin{eqnarray}\label{combinexpress1}
rank(\bH_{lat}(m,n))=\mu_0(m,n)+\sum_{i=1}^{\min\{\omega_n,r-1\}} 2^i + \sum_{i=\omega_n+1}^{r-1}n
\end{eqnarray} 
where $\omega_n$ is the largest integer such that $2^{\omega_n} \leq n$, and the second term exists only if $\omega_n+1 \leq r-1$.

Using (\ref{combinexpress}) and (\ref{combinexpress1}), we can yield some more interesting result with respect to some special case. Firstly we consider the case that $m=n \geq \frac{q}{2}$. In this case we have  $\min\{\omega_m,r-1\}=\min\{\omega_{m+1},r-1\}=r-1$, and the second sum term of (\ref{combinexpress1}) doesn't exists. Thus we have $$rank(\bH_{lat}(m+1,m+1))-rank(\bH_{lat}(m,m))=\mu_0(m+1,m+1))-\mu_0(m,m)).$$ Since it has been known in \cite{HLG_12} that if $k_0$ denote the number of 0's in $\bB_{lat}^{\circ 0}(m,n)$, we have 
\begin{equation} \label{rankb0}
\mu_0(m,n)=\left\{\begin{array}{ll}
k_0+1 & \mbox{ if $k_0<\min\{m,n\}$,}\\
k_0-1 & \mbox{ if $k_0=n=m$ is odd,}\\
k_0   & \mbox{ otherwise.}\end{array}\right.
\end{equation}

Based on (\ref{rankb0}), we readily know that for even $m$, $\mu_0(m+1,m+1)=\mu_0(m,m)$, and for odd $m$, $\mu_0(m+1,m+1)=\mu_0(m,m)+2.$ Thus we have the following recursive relationship in the case of $m \geq \frac{q}{2}$:

\begin{equation} \label{Eq_rankB0}
rank(\bH_{lat}(m+1,m+1))=\left\{\begin{array}{ll}
rank(\bH_{lat}(m,m)) & \mbox{ if $m$ is even,}\\
rank(\bH_{lat}(m,m))+2 & \mbox{ if $m$ is odd,}\end{array}\right.
\end{equation}

Secondly in the case of $n \leq \frac{q}{2}$, we consider two subarrays $\bH_{lat}(q,n)$ and $\bH_{lat}(\frac{q}{2},n)$. In this case, we have  $\min\{\omega_q,r-1\}=\min\{\omega_\frac{q}{2},r-1\}=r-1$, and the second sum term of (\ref{combinexpress1}) doesn't exists. Hence based on (\ref{combinexpress1}) we have $$rank(\bH_{lat}(q,n))-rank(\bH_{lat}(\frac{q}{2},n))=\mu_0(q,n)-\mu_0(\frac{q}{2},n).$$ Since $n \leq \frac{q}{2}$, based on (\ref{rankb0}) we conclude that $\mu_0(q,n)-\mu_0(\frac{q}{2},n)$. Hence in case of $n \leq \frac{q}{2}$ we have 
\begin{equation} \label{latinredu1}
rank(\bH_{lat}(q,n))=rank(\bH_{lat}(\frac{q}{2},n))
\end{equation}
equation (\ref{latinredu1}) indicates that in the case of $n \leq \frac{q}{2}$, all the last $\frac{(q-1)q}{2}$ rows of  $\bH_{lat}(q,n)$ are redundant rows, hence the null spaces of $\bH_{lat}(q,n)$ and $\bH_{lat}(\frac{q}{2},n)$ give the same QC-LDPC codes.

Thirdly in the case of $n>\frac{q}{2}$, we consider two subarrays $\bH_{lat}(q,n)$ and $\bH_{lat}(n,n)$. In this case we have $\min\{\omega_q,r-1\}=\min\{\omega_n,r-1\}=r-1$, and the second sum term of (\ref{combinexpress1}) doesn't exists. Hence based on (\ref{combinexpress1}) we have $$rank(\bH_{lat}(q,n))-rank(\bH_{lat}(n,n))=\mu_0(q,n)-\mu_0(n,n).$$ It follows from (\ref{rankb0}) that $\mu_0(q,n)=n$, and that $\mu_0(n,n)=n-1$ for odd $n$, and $\mu_0(n,n)=n$ for even $n$. Hence in case of $n \leq \frac{q}{2}$ we have 

\begin{equation} \label{latinredu2}
rank(\bH_{lat}(q,n))=\left\{\begin{array}{ll}
rank(\bH_{lat}(n,n)) & \mbox{ if $n$ is even,}\\
rank(\bH_{lat}(n,n))+1 & \mbox{ if $n$ is odd,}\end{array}\right.
\end{equation}
equation (\ref{latinredu2}) indicates that in the case of even $n>\frac{q}{2}$, all the last $(q-1)(q-n)$ rows of  $\bH_{lat}(q,n)$ are redundant rows, hence the null spaces of $\bH_{lat}(q,n)$ and $\bH_{lat}(n,n)$ give the same QC-LDPC codes.

Suppose we take $m=n=q$, i.e. we use the whole matrix $\bH_{lat}$ as the parity-check matrix, $\bH_{lat} $ has $2^r(2^r - 1) - 3^r + 1 = 4^r - 3^r - 2^r + 1$ redundant rows, which satisfies the equality in (35).  For $r \geq  4$, the number of redundant rows is very large.  The null space of $\bH_{lat}$  gives an RC-constrained $(4^r-2^r, 4^r -3^r - 2^r +1) $ QC-LDPC code with minimum distance at least $2^r + 1$, whose Tanner graph has a girth of at least 6.

\begin{example}   Consider the $64\times 64$ Latin square constructed based on GF($2^6$) using the form of (36).  It is a $64\times 64$ matrix over GF($2^6$).  We use this matrix as the base matrix $\bB _{lat}$. Hence $m = n = 64$ and $e = 63$.  The 63-fold array dispersion of $\bB_{ lat}$ gives a $64\times 64$ array $\bH_{lat}$  of CPMs and ZMs  of size $63\times 63$ with the ZM's lying on the main diagonal of  $\bH _{lat}$.   $\bH_{lat}$  is a $4032\times 4032$ matrix with both column and row weights 63.  Hence the code has minimum distance at least 64. The rank of $\bB _{lat}$ is 2 and the rank of $\bB _0$ is 64. Table 2 gives the cyclotomic cosets of 2 modulo 63 and the ranks of the conjugate matrices of $\bB _{lat}$ and their bounds.  There are 13 cyclotomic cosets of 2 modulo 63 and hence the 63 Hadamard powers of the base matrix $\bB _{lat}$  are grouped into 13 conjugate classes.  From Table 2, we see that for $0 \leq  i < 13$, the upper bound $\mu _1\mu _{i^*}$ is equal to the true ranks of matrices in the conjugate class $\Psi _i$.  It follows from (24) that $rank(\bH_{lat}) = 728$ which is exactly equal to $3^6 - 1$.  This shows that the bound given by (30) (or the bound given by (34)) is tight. The parity-check matrix $ \bH_{lat}$ has 3304 redundant rows and a redundancy $\xi  = 0.8194$.  The performances of this code over AWGN channel decoded with 5, 10 and 50 iterations of the SPA are shown in Figure 2.  We see that decoding of this code converges very fast. At the BER of $10^{-6}$, the performance gap between 5 and 50 iterations is about 0.1 dB.  The performance curves for 10 and 50 iterations almost overlap.  Also at the BLER (block error rate) of $10^{-5}$, the code decoded with 10 iterations of the SPA performs 1.2 dB from the sphere packing found. 
\end{example}

\begin{table}
	\centering
		\begin{tabular}{c|c|c|l|c|c}
		\hline
		   Row index  &  Row index  & Weight &      Cyclotomic coset       & Upper bound  & True rank\\
     \hline
       $ i $        &         $i^*$   &   $\tau(t)$       &             $\qquad\quad {\cal  Z}_i$                                  &      $u_1^{\tau(t)}$           &        $u_i$\\
       \hline                                                       
        0         &         N/A          &  N/A &   0                                     &        N/A           &     N/A\\
    \hline 
        1         &         1      &  1     &  1     2    4     8    16   32   &         2              &       2      \\ 

\hline
                           2        &        2   & 2          &  3     6   12   24   48   33   &         4              &       4\\
    \hline 
                              3        &        2  & 2          &  5    10  20   40   17   34   &         4              &       4\\
    \hline 
                          4         &          3     & 3      &  7    14  28   56   49   35   &         8              &       8\\
    \hline 
                         5         &          2    & 2        &  9    18   36                       &         4              &       4\\
    \hline 
                          6         &          2  & 3        & 11   22   44  25   50  37    &         8              &       8   \\
    \hline 
                          7         &          3     & 3      & 13   26   52  41   19  38    &         8              &       8  \\
    \hline 
                          8         &          5     &4      & 15   30   60  57   51  39    &        16             &      16\\
    \hline 
                          9         &          4       & 3    & 21   42                              &         8              &       8\\
    \hline 
                         10        &          7         & 4  & 23   46   29  58   53  43    &        16             &      16\\
    \hline 
                         11        &          8       & 4    & 27   54   45                       &        16             &      16\\
      \hline 
                            12        &          9      &5     & 31   62   61  59   55  47    &        32             &       32   \\
    \hline    
		\end{tabular}
	\caption{Cyclotomic cosets of 2 modulo 63 and the ranks of the conjugate matrices of the base matrix $\bB_{lat}$.}	
	\label{tab:Cyclotomic}
\end{table}
\begin{example}
 Choose $m = 6$ and $n = 64$.  Choose the first 6 rows of $64\times 64$ matrix over GF($2^6$) constructed in Example 1 as the base matrix, denoted $\bB _{lat}(6,64)$, for code construction. $\bB _{lat}(6,64)$ contains 6 zeros, one in each of the first 6 columns.  Array dispersion of this base matrix gives a $6\times 64$ array $\bH_{lat} (6,64)$ of CPM's and ZM's of size $63\times 63$. This array is a $378\times 4032$ matrix over GF(2) with row weight 63 and two column weights 5 (315 columns) and 6 (3717 columns).  Using the upper bound on $rank(\bB _{lat}(6,64)^{\circ t_i })$ given by (34), we find that the rank of $\bH_{lat} (6,64)$ is upper bounded by 324.  However, the actual rank of $\bH_{cpm,lat} (6,64)$ is  also 324.  Therefore, the bound gives the actual rank of $\bH_{lat} (6,64)$. The matrix $\bH_{cpm,lat} (6,64)$ has 54 redundant rows. The null space of the array $\bH_{lat} (6,64)$ gives a (4032,4708) QC-LDPC code with rate 0.92.  The error performances of this code over the AWGN channel decoded with 5, 10 and 50 iterations are shown in Figure 3.
\end{example}
 
\begin{example}
 Consider the $6\times 64$ array constructed in Example 2.  Each of the first 6 column blocks contains a single ZM.  If we remove the first 6 column blocks of the array, we obtain a $6\times 58$ subarray $\bH_{lat}(6,58)$ of the entire $64\times 64$ array $\bH_{lat}$ constructed in Example 1.  The subarray $\bH_{lat}(6,58)$ is a $378\times 3654$ matrix over GF(2) with column and row weights 6 and 58, respectively.   Using the upper bound on $rank(\bB _{lat}(6,58))$ given by (34), we find that the rank of $\bH_{lat} (6,58)$ is upper bounded by 324.
\end{example}

\subsection{Vandermonde Matrices}
 
    A special type of Vandermonde matrices also satisfies the $2\times 2$ SM-constraint.   Again, we consider the field GF($q$) with $q = 2^r$.  Let $p$ be the largest prime factor of $q - 1$.  Let $k$ be the integer such that $q - 1 = kp$.  If $q - 1$ is prime, then $p = q - 1$ and $k = 1$. Let $\alpha$ be a primitive element of GF($q$) and $\beta =\alpha^k$.  Then, the order of $\beta$ is $p$.  Form the following $p\times p$ matrix over GF($q$):
\begin{equation}
         \bB_{van}  =  \left[ \begin{array}{ccccc}
                                1        &      1   &      \cdots   &   1    &         1 \\
                                    1         &       \beta     &    \cdots  &     \beta^{p-2}    &        \beta^{p-1} \\
                                    \vdots &                     &                    \ddots  &   & \vdots\\
                           1        &     \beta^{p-1}   &    \cdots&  \beta^{(p-1)\times (p-2)}    &         \beta^{(p-1)\times (p-1)}  
         \end{array}\right].
\end{equation} 
It can be readily proved that $\bB_{van}$ satisfies the $2\times 2$ SM-constraint and hence can be used for constructing RC-constrained QC-LDPC codes.  The rank of $\bB_{van}$ is $p$.  In fact, all Hadamard powers of $\bB_{van}$, $\bB_{van} ^{\circ 1} = \bB_{van},  \bB_{van}^{\circ 2}, \ldots, \bB_{van}^{\circ (p-1)}$, have the same rank $p$ since $p$ is a prime. Since all the entries of $\bB_{van}$ are nonzero elements of GF($q$), all the entries of $\bB_{van}^{\circ 0}$ are 1's.  Consequently, $rank(\bB_{van}^{\circ 0}) = 1$.
 
   The CPM array dispersion of $\bB_{van}$ results in a $p\times p$ array $\bH_{van}$ of CPMs of size $e\times e$ with $e = 2^r - 1$. It is a $pe\times pe$ matrix over GF(2) with both column and row weights equal to $p$.  It follows from (19) that the rank of $\bH_{van}$ is
 \begin{equation}
                            rank(\bH_{van}) = 1 + (p - 1)p.
                             \end{equation}
The row redundancy is then $\xi (\bH_{van})= (2^r - p - 1)/(2^r - 1)$.  In case $q - 1 = 2^r - 1$ is a prime, then $rank(\bH_{van}) = 1 + (2^r - 2)(2^r - 1)$.

   For $1 \leq  m, n \leq  p $ and $m\leq   n$, let $\bB_{van}(m,n)$ be an $m\times n$ submatrix of $\bB_{van}$ which is still a Vandermonde matrix with rank $m$.  The ranks of $\bB_{van}^{\circ 1}(m,n) = \bB_{van}(m,n)$,  $\bB_{van}^{\circ 2}(m,n)$, $...,$  $\bB_{van}^{\circ (p-1)}(m,n)$ are $ m$.  The rank of $\bB_{van}^{\circ 0}(m,n)$ is 1.  The CPM array dispersion of $\bB_{van}(m,n)$ results in an $m\times n$ array $\bH_{van}(m,n)$ of CPMs of size $e\times e$.  It follows from (19), the rank of $\bH_{van}(m,n)$
 \begin{equation}
                          rank(\bH_{van}(m,n)) = mp - m + 1. 
 \end{equation}
The null space of $\bH_{van}(m,n)$ gives an RC-constrained ($m$,$n$)-regular QC-LDPC code of length $n(2^r - 1)$.  If $n = p$, the base matrix $\bB_{van}(m,p)$ is actually the parity-check matrix of a non-primitive RS code over GF($q$).

\section{Construction of QC-LDPC Codes Based on Random Partition of Finite Fields and its Rank Analysis}
In this section, we present a new algebraic method for constructing a class of QC-LDPC codes. Given a finite field, we first partition the elements of the field into two disjoint subsets (any partition).  Based on these two disjoint subsets, we form a matrix over the given field.  Every entry of the matrix is a sum of two elements, one from one subset and the other from the second subset. From this matrix, we can form an array of CPMs.  This array, as a matrix, satisfies the RC-constraint.  Then, the null space of this array gives a QC-LDPC code.
 
\subsection{A Class of $2 \times 2$ SM-Constrained Base Matrices Constructed by Field Partitions}
Let GF$(2^r)$ be a finite field with $2^r$ elements which is an extension field of the binary field GF$(2)$. Let $\alpha$ be a primitive element of GF$(2^r)$.  Then, the powers of $\alpha$, $\alpha^{-\infty}=0, \alpha^0 = 1, \alpha, \alpha^2,...,\alpha^{2^r- 2}$, give all the elements of  GF$(2^r)$ and $\alpha^{2^r -1} = 1$.
Let $m$ and $n$ be two positive integers such that $m+ n = 2^r$.  Partition the elements of GF$(2^r)$ into two disjoint subsets, $G_1=\{\lambda_0,\lambda_1,...,\lambda_{m-1}\}$ and $G_2=\{\delta_0,\delta_1,...,\delta_{n-1}\}$, i.e., $G_1 \cup G_2=$GF$(2^r)$ and $G_1 \cap G_2= \phi$.  Form the following $m \times n$ matrix over GF$(2^r)$:
\begin{equation}\
{\bB_{rp}}=\left[\begin{array}{cccc}
\lambda_0+\delta_0 & \lambda_0+\delta_1 & \cdots & \lambda_0+\delta_{n-1} \\
\lambda_1+\delta_0 & \lambda_1+\delta_1 & \cdots & \lambda_1+\delta_{n-1} \\
\vdots    & \vdots    & \ddots & \vdots      \\
\lambda_{m-1}+\delta_0 & \lambda_{m-1}+\delta_1 & \cdots & \lambda_{m-1}+\delta_{n-1} \\
\end{array} \right]
\end{equation}
We note that each row of $\bB_{rp}$ is formed by adding an element in $G_1$ to all the $n$ elements in $G_2$ and each column of $\bB_{rp}$ is formed by adding an element in $G_2$ to all the $m$ elements in $G_1$.  Since the characteristic of the field GF$(2^r)$ is 2, every element in GF$(2^r)$ is its own additive inverse.  It follows from the fact that $G_1$ and $G_2$ are disjoint and the formation of $\bB_{rp}$, we readily prove that: (1) all the entries in $\bB_{rp}$ are nonzero; (2) all the entries in a row of $\bB_{rp}$ are different; and (3) all the entries in a column of $\bB_{rp}$ are different.  Every entry in $\bB_{rp}$ is a power of the primitive element $\alpha$ of GF$(2^r)$.
In the following, we prove that the $m \times n$ matrix $\bB_{rp}$ over GF$(2^r)$ given by (1) satisfies the $2 \times 2$ SM-constraint. Hence, $\bB_{rp}$ can be used as the base matrix to construct an $m \times n$ RC-constrained binary array of CPMs. 
\begin{theorem} 
The $m \times n$ matrix $\bB_{rp}$ over GF$(2^r)$ given by (1) satisfies the $2 \times 2$ SM-constraint.
\end{theorem}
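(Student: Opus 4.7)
The plan is to invoke Theorem 1, which reduces the $2\times 2$ SM-constraint to showing that every $2\times 2$ submatrix of $\bB_{rp}$ either has a zero entry or is non-singular. Since the remark immediately before the theorem already establishes that every entry of $\bB_{rp}$ is nonzero (because $G_1$ and $G_2$ are disjoint, and in characteristic 2, $\lambda_i + \delta_j = 0$ would force $\lambda_i = \delta_j$), the zero-entry case does not arise, and the whole task reduces to proving non-singularity of every $2\times 2$ submatrix.

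First I would fix an arbitrary $2\times 2$ submatrix: pick two distinct row indices $i \ne i'$ and two distinct column indices $j \ne j'$. Its four entries are $\lambda_i + \delta_j$, $\lambda_i + \delta_{j'}$, $\lambda_{i'} + \delta_j$, $\lambda_{i'} + \delta_{j'}$. The key step is to compute the determinant directly and observe that, in characteristic $2$, the cross terms $\lambda_i \lambda_{i'}$ and $\delta_j \delta_{j'}$ appear twice and therefore cancel. What remains factors cleanly as
\[
(\lambda_i + \delta_j)(\lambda_{i'} + \delta_{j'}) + (\lambda_i + \delta_{j'})(\lambda_{i'} + \delta_j) = (\lambda_i + \lambda_{i'})(\delta_j + \delta_{j'}).
\]
Because $\lambda_i,\lambda_{i'}$ are distinct elements of $G_1$ and $\delta_j,\delta_{j'}$ are distinct elements of $G_2$, and GF$(2^r)$ has characteristic $2$ (so $x+y=0$ iff $x=y$), both factors are nonzero, hence the determinant is nonzero and the $2\times 2$ submatrix is non-singular.

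There is no real obstacle here: the proof is a one-line algebraic identity once the characteristic-$2$ cancellation is noticed. The only thing to be careful about is clearly stating which pairs of indices are being chosen (so that one is genuinely looking at a $2\times 2$ submatrix with $i\ne i'$ and $j\ne j'$) and appealing explicitly to the distinctness of elements within $G_1$ and within $G_2$ to rule out a vanishing factor. After the determinant factorization, conclude by Theorem 1 that $\bB_{rp}$ satisfies the $2\times 2$ SM-constraint and is therefore eligible as a base matrix for the $e$-fold array dispersion construction of an RC-constrained QC-LDPC code.
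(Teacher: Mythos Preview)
Your proposal is correct and follows essentially the same approach as the paper: reduce to non-singularity of an arbitrary $2\times 2$ submatrix, expand the determinant, and factor it as $(\lambda_i - \lambda_{i'})(\delta_j - \delta_{j'})$ (equivalently $(\lambda_i + \lambda_{i'})(\delta_j + \delta_{j'})$ in characteristic~2), which is nonzero by distinctness. The only cosmetic difference is that the paper phrases the last step as a contradiction, whereas you compute the determinant directly.
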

\beginofproof 
Since all the entries of $\bB_{rp}$ are nonzero.  To prove the theorem,we only need to prove that every $2 \times 2$ submatrix is non-singular.  Consider a $2 \times 2$ submatrix of $\bB_{rp}$:
\begin{equation}
{\bf Q}=\left[\begin{array}{cccc}
\lambda_i+\delta_k & \lambda_i+\delta_t\\
\lambda_j+\delta_k & \lambda_j+\delta_t\\ 
\end{array} \right]
\end{equation}
where $0 \leq i,j < m$ and $0 \leq k,t <n$ and $i \neq j$ and $k \neq t$. Since $i \neq j$ and $k \neq t$, then $\lambda_i \neq \lambda_j$ and $\delta_t \neq \delta_k$.  If this matrix is singular, then $$(\lambda_i+\delta_k )(\lambda_j+\delta_t)-(\lambda_i+\delta_t)(\lambda_j+\delta_k)=0.$$
Manipulating the above equality, we have $(\lambda_i - \lambda_j)(\delta_t - \delta_k) = 0$ which implies either $\lambda_i = \lambda_j$ or $\delta_t = \delta_k$.  This contradicts the fact that $\lambda_i \neq \lambda_j$ and $\delta_t \neq \delta_k$.  Therefore, any $2 \times 2$ submatrix $\bf Q$ of $\bB_{rp}$ is non-singular.  This proves the theorem. 
\endofproof

\subsection{Construction of Codes}
Since the base matrix $\bB_{rp}$ over GF$(2^r)$ satisfies the $2 \times 2$ SM-constraint.  It can be used to construct an $m \times n$ RC-constrained binary array $\bH_{rp}$ of CPMs and ZMs. Since each entry in $\bB_{rp}$ is a nonzero element in GF$(2^r)$, it must be a power of $\alpha$ which is a primitive element of GF$(2^r)$. Label the rows  and columns of a $(2^r - 1) \times (2^r - 1)$ CPM from 0 to $2^r - 2$ which correspond to powers of $\alpha$, $\alpha^0 = 1,\alpha,\alpha^2,...,\alpha^{2^r - 2}$. For $0 \leq i< m$ and $0 \leq j < n$, let $\lambda_i + \delta_j = \alpha^{k_{i,j}}$ with $0 \leq k_{i,j}< 2^r - 1$. Then the construction of $\bH_{rp}$ directly from $\bB_{rp}$ is carried out as follows: replacing the entry $\lambda_i + \delta_j = \alpha^{k_{i,j}}$ at the $i$th row and $j$th column of $\bB_{rp}$ by a $(2^r - 1) \times (2^r - 1)$ CPM whose top row (called the generator) has a single 1-component at the position $k_{i,j}$.  This gives the array $\bH_{rp}$ corresponding to the base matrix $\bB_{rp}$.  Since all the entries are nonzero, $\bH_{rp}$ contains no ZM and is an array of CPMs only. 

$\bH_{rp}$ is a $m(2^r - 1) \times n(2^r - 1)$ matrix over GF$(2)$.  Since each CPM has both column and row weights equal to 1, the column and row weights of $\bH_{rp}$, as a $m(2^r- 1) \times n(2^r - 1)$ matrix over GF$(2)$, are $m$ and $n$ respectively.  Consequently, the null space of $\bH_{rp}$ gives an RC-constrained QC-LDPC code of length $n(2^r-1)$ with minimum distance at least $m+1$ whose Tanner graph has a girth at least 6.  Note that the null space of any sub-array of $\bH_{rp}$ also gives a QC-LDPC code.

\subsection{Rank Analysis of the Parity-Check Matrices}
Next we will analyze the rank to show that the resulting parity-check matrix is rich in redundant rows and satisfies the two guidelines derived from Corollary 1 and 2. First, because there are no zero elements in $\bB_{rp}$, $\bB_{0}$ is an all `1' matrix and has rank $\mu_0=1$. Second, if we rewrite $\bB_{rp}$ as the product of two matrix,
\[
\bB_{rp} = \bV_{L} \bV_{R},
\]
where 
\[
\bV_{L}= \left[  
\begin{array}{ll}
\lambda_0 & 1\\
\lambda_1 & 1\\
\lambda_2 & 1\\
\vdots\\
\lambda_{m -1} & 1\\
\end{array}
 \right] \qquad \mbox{ and } \qquad \bV_{L}= \left[  
\begin{array}{lllll}
1 & 1 & 1 & \ldots & 1\\
\zeta_0 & \zeta_1 & \zeta_2 & \ldots & \zeta_{n-1}\\
\end{array}
 \right]
\]
are full row-rank and full column-rank, respectively, it is clear that the rank of $\bB_{rp}$ is small $\mu_1=\min\{m, n, 2\}$. From Theorem 5, an upper bound on the rank of $\bH_{rp}$ can be given from the base matrix $\bB_{rp}$,
\begin{eqnarray}\label{equality}
 rank(\bH_{rp}) &\leq&  \mu_0 + \sum\limits^{r  - 1}_{i=1} {r \choose i} \min\{m,n,\mu_1^{i} \},  \nonumber\\
 & \leq & 1+ \sum\limits^{r  - 1}_{i=1} {r \choose i}   \min\{m,n,2^{i} \}.
\end{eqnarray}
In the above section, we show that this bound is very tight and in some cases the equality holds. Here we will show that for parity-check matrices constructed based on random partition, the equality holds again in some special cases.  Here we only give the proof for the case $m\leq n$. If $m> n$, the proof is similar.

For $0 \leq t < 2^r - 1$, we have $\bB_{rp}^{\circ t} = [(\lambda_i + \delta_j)^t]$, $0 \leq i< m$ and $0 \leq  j < n$.  In the binomial expansion of $(\lambda_i + \delta_j)^t$, only the terms with odd coefficients exist since the odd coefficients modulo-2 are equal to 1 while even coefficients modulo-2 become zeros.  Let $\theta_t$ be the number of odd coefficients in the binomial expansion of $(\lambda_i + \delta_j)^t$ (or the number of odd integers in the $t$-th level of \emph {Pascal triangle}). Let $l_1, l_2, . . . , l_{\theta_t}$ denote the positions of these odd coefficients.  We note that $l_1 = 0$ and $l_{\theta_t} = t$. It is clear that $\theta_t \leq t + 1$. Then 
\begin{equation}\label{binomial}
(\lambda_i+\delta_j)^t=\lambda_i^t+\lambda_i^{t-l_2} \delta_j^{l_2}+\lambda_i^{t-l_3} \delta_j^{l_3}+...+\lambda_i^{t-l_{\theta_{t-1}}} \delta_j^{t_{\theta_{t-1}}}+\delta_j^t
\end{equation}
Based on the expression given by (\ref{binomial}), the $t$-th Hadamard power $\bB_{rp}^{\circ t}$ can be expressed as a product of two matrices as follows:
\begin{equation}\label{e4}
\bB_{rp}^{\circ t}={\bf V}_{t,L}{\bf V}_{t,R}
\end{equation}
with 
\begin{eqnarray}\label{e6}
{\bf V}_{t,L}=\left[\begin{array}{ccccc}
\lambda_0^t& \lambda_0^{t-l_2}&\lambda_0^{t-l_3}& \cdots & 1 \\
\lambda_1^t& \lambda_1^{t-l_2}&\lambda_1^{t-l_3}& \cdots & 1 \\
\lambda_2^t& \lambda_2^{t-l_2}&\lambda_2^{t-l_3}& \cdots & 1 \\
\vdots    & \vdots  & \vdots  & \ddots & \vdots      \\
\lambda_{m-1}^t& \lambda_{m-1}^{t-l_2}&\lambda_{m-1}^{t-l_3}& \cdots & 1 \\
\end{array} \right],\nonumber\\
{\bf V}_{t,R}=\left[\begin{array}{ccccc}
1& 1&1& \cdots & 1 \\
\delta_0^{l_2}& \delta_1^{l_2}&\delta_2^{l_2}& \cdots & \delta_{n-1}^{l_2} \\
\delta_0^{l_3}& \delta_1^{l_3}&\delta_2^{l_3}& \cdots & \delta_{n-1}^{l_3} \\
\vdots    & \vdots  & \vdots  & \ddots & \vdots      \\
\delta_0^{l_{\theta_t}}& \delta_1^{l_{\theta_t}}&\delta_2^{l_{\theta_t}}& \cdots & \delta_{n-1}^{l_{\theta_t}} \\
\end{array} \right],
\end{eqnarray}
where ${\bf V}_{t,L}$ is a $m \times \theta_t$ matrix over GF$(2^r)$ and ${\bf V}_{t,R}$ is a $\theta_t \times n$ matrix over GF$(2^r)$.
Let $\omega(t)$ be the number of nonzero terms in the radix-2 expansion(or binary representation) of $\theta_t$, called the radix-2 weight of $\theta_t$.  It follows form Lucas theorem [29] that $\theta_t = 2^{\omega(t)}$. For $0\leq t<2^r-1$, since $\theta(t) \leq t + 1 < 2^r$, we must have $\omega(t)<r$ and $\theta_t<2^{r-1}$.

To determine the rank of $\bB_{rp}^{\circ t}$, we need to determine the ranks of ${\bf V}_{t,L}$ and ${\bf V}_{t,R}$. This can be done with the following case: the nonzero elements of both $G_1$ and $G_2$ form two sequences of consecutive powers of $\alpha$.  The 0 element can be in either $G_1$ or $G_2$.  For example, $G_1 = \{0, \alpha^0, \alpha, . . . , \alpha^{m-2}\}$ and $G_2 = \{\alpha^{m-1}, \alpha^m, . . . , \alpha^{2^r - 2}\}$.   Let $n = 2^r - m$. We also assume that $m \leq 2^{r-1} \leq n$.  In this case, ${\bf V}_{t,L}$  and ${\bf V}_{t,R}$  can be transformed into matrices with the Vandermonde structure [27],[28] by elementary column and row operations.  Since $m \leq n$, then $n \geq 2^{r-1} > \theta_t$.   As a result, $rank({\bf V}_{t,L}) = min(m,\theta_t)$ and $rank({\bf V}_{t,R}) = \theta_t$.  It follows from (\ref{e4}) that $rank(\bB_{rp}^ {\circ t}) = rank({\bf V}_{t,L} {\bf V}_{t,R})$.  Since ${\bf V}_{t,R}$ has full row rank,
\begin{equation}
rank(\bB_{rp}^{\circ t})=rank({\bf V}_{t,L}{\bf V}_{t,R})=rank({\bf V}_{t,L})=min(m,\theta_t)
\end{equation}
for $0 \leq t <2^r-1$. Since $rank(\bB_{rp}^ {\circ 0}) = \mu_0=1$. Then, it follows from (19) that the rank of $\bH_{rp}$ is:

\begin{eqnarray*}\label{equalityhold}
&&rank(\bH_{rp})=1+\sum_{t=1}^{2^r-2} rank(\bB_{rp}^{\circ t}) \nonumber\\
&&=1+\sum_{t=1}^{2^r-2} min(m,\theta_t) \nonumber\\
&&=1+ \sum^{r}_{i=1} {r \choose i}min(m,2^{i})
\end{eqnarray*}
Therefore, the equality in (\ref{equality}) holds.
Let $\omega_0$ be the largest integer such that $2^{\omega_0} \leq m$ ,Then another combinatorial expression for the sum terms given by (\ref{equalityhold}) can be derived as follows:
{\small
\begin{eqnarray*}\label{e9} 
&&\sum_{t=1}^{2^r-2}rank(\bB_{rp}^{\circ t})=\sum_{t=1}^{2^r-2}min(m,\theta_t) \nonumber\\
&&=\sum_{{1 \leq t \leq 2^r-2} \atop {\omega_0<\omega(t)}}m+\sum_{{1 \leq t \leq 2^r-2} \atop {\omega_0 \geq \omega(t)}} \theta_t \nonumber\\ 
&&=\sum_{\omega=\omega_0+1}^{m-1} \nonumber \sum_{\omega(t)=\omega}m+\sum_{\omega=1}^{\omega_0}  \sum_{\omega(t)=\omega}2^{\omega(t)} \nonumber\\
&&=\sum_{\omega=\omega_0+1}^{r-1} {r \choose \omega}m +\sum_{\omega=1}^{\omega_0} {r \choose \omega}2^\omega \nonumber\\
&&=\sum_{\omega=\omega_0+1}^{r-1} {r \choose \omega}m +\sum_{\omega=1}^{r-1} {r \choose \omega}2^\omega-\sum_{\omega=\omega_0+1}^{r-1} {r \choose \omega}2^\omega. \nonumber\\ 
\end{eqnarray*}
}
Note that $\sum_{\omega=1}^{r-1} {r \choose \omega}2^\omega=3^r-2^r-1$. Consequently, we have 
\begin{equation}
\sum_{t=1}^{2^r-2}rank(\bB_{rp}^{\circ t})=3^r-2^r-1-\sum_{\omega=\omega_0+1}^{r-1} {r \choose \omega}(2^\omega-m). 
\end{equation}
Since $rank(\bB_{rp}^{\circ 0})=1$, we have the following combinatorial expression for the rank of $\bH_{rp}$ with $m \leq 2^{r-1}$:
\begin{equation}\label{e10}
rank(\bH_{rp})=3^r-2^r-\sum_{\omega=\omega_0+1}^{r-1} {r \choose \omega}(2^\omega-m).
\end{equation}

For $m = 2^{r-1} = n$, $\bB_{rp}$ is a square matrix, denoted by $\bB_{rp,s}$ and its corresponding array $\bH_{rp,s}$ is a $2^{r-1} \times 2^{r-1}$ array of $(2^r - 1) \times (2^r - 1)$ CPMs.  In this case, $\omega_0 = r - 1$ and
\begin{equation}\label{e12}
rank(\bH_{rp,s})=3^r-2^r.
\end{equation}
The null space of $\bH_{rp,s}$ gives a binary QC-LDPC code with the following parameters: (1) length $n = 2^{r-1}(2^r - 1)$; (2) dimension $k = 2^{2r-1}-3^r$; (3) minimum distance $d_{min}$ is at least $2^{r-1}+ 1$.  The null space of any sub-array of $\bH_{rp,s}$ gives a QC-LDPC code.

\begin{example}
Let GF$(2^6)$ as the field for construction.  Let $\alpha$ be a primitive element of GF$(2^6)$.  Partition GF$(2^6)$ into two subsets, $G_1 = \{0, 1, \alpha, \alpha^2, \alpha^3, \alpha^4\}$ and $G_2 = \{\alpha^5, \alpha^6, . . . , \alpha^{62}\}$.  Using these two subsets of GF$(2^6)$, we can construct a $6 \times 58$ base matrix $\bB_{rp}$ over GF$(2^6)$.  Array dispersion of this base matrix results in a $6 \times 58$ array $\bH_{rp}$ of $63 \times 63$ CPMs.  $\bH_{rp}$ is a $378 \times 3654$ RC-constrained matrix over GF$(2)$ with column and row weights 6 and 58, respectively.  Since $m = 6$, we find that $\omega_0 = 2$. Using the combinatorial expression given by (\ref{e10}), we find that the rank of $\bH_{rp}$ is 319.  Hence, the null space of $\bH_{rp}$ gives a (6,58)-regular (3654,3335) QC-LDPC code with rate 0.9126.  The bit and block error performances of the code decoded with 5, 10 and 50 iterations of the MSA are shown in Figure 2. We see that the decoding of the code converge fast.  At the BER of $10^{-6}$, the code decoded with 50 iterations of the MSA perform 1.2 dB from the Shannon limit.  At the BLER (block error rate) $10^{-5}$, it performs 0.8 dB from the sphere packing bound. 
\end{example}
\begin{figure}[!t]
\centering
\includegraphics[width=3.5in]{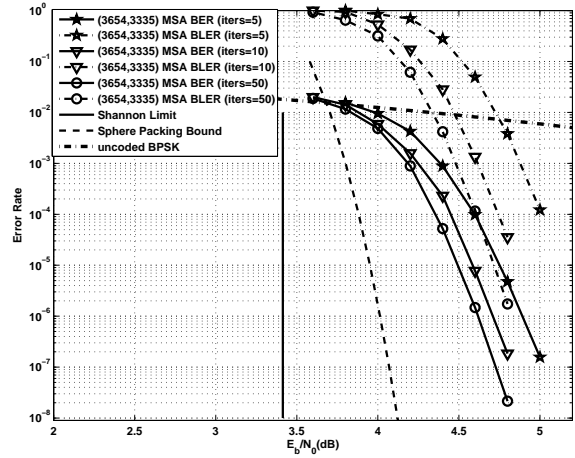}
\caption{ The Error Performance of QC-LDPC code given in Example 4} \label{delta}
\end{figure}

Similarly, by replacing each entry of $\bB_{rp}$ by $\alpha$-multiplied CPM's, we can construct non-binary RC-constrained arrays $\bH_{\alpha,rp}$ over $GF(2^r)$ based on field partitions.  The null spaces of these arrays give a class of non-binary QC-LDPC codes.

\begin{example}
Let GF$(2^5)$ be the field for code construction and $\alpha$ be a primitive element of the field.  Partition the elements of GF$(2^5)$ into two disjoint subsets, $G_1 = \{0, 1, \alpha,\alpha^2\}$ and $G_2 = \{\alpha^3, \alpha^4,...,\alpha^{30}\}$.  Based on these two subsets of GF$(2^5)$, we form a $4 \times 28$ base matrix $\bB_{rp}$ over GF$(2^5)$ of the form given by (40).  Replacing each entry in $\bB_{rp}$ by its corresponding $\alpha$-multiplied CPM of size $31 \times 31$, we obtain a $4 \times 28$ array $\bH_{\alpha,rp}$ of $\alpha$-multiplied CPMs of size $31 \times 31$.  $\bH_{\alpha,rp}$ is a $124 \times 868$ RC-constrained matrix over GF$(2^5)$ with column and row weights 4 and 28, respectively.  Since $m = 4$, the parameter $w_0$ is 2. Using the combinatorial expression given by (48), we find that the rank of $\bH_{\alpha,rp}$ is 111.  The null space of $\bH_{\alpha,rp}$ gives a (4,28)-regular 32-ary (868,757) QC-LDPC code with rate 0.8722.  The bit, symbol and block error performances of this code decoded with 50 iterations of fast Fourier transform $q$-ary SPA (FFT-QSPA) are shown in Figure 3.  At the BLER of $10^{-5}$, the code performs 1.72 dB from the sphere packing bound. 
\end{example}

\begin{figure}[!t]
\centering
\includegraphics[width=3.5in]{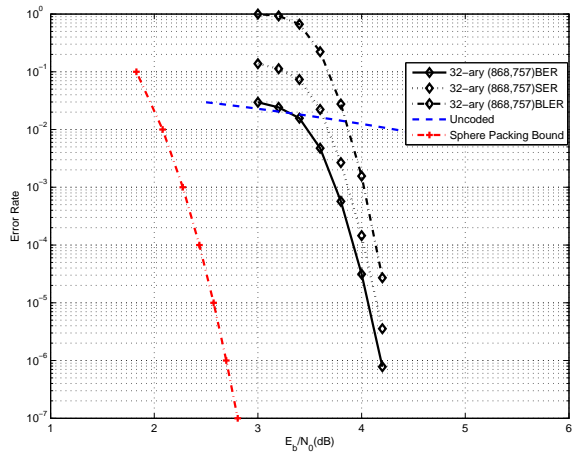}
\caption{ The Error Performance of QC-LDPC code given in Example 5.} \label{delta}
\end{figure}

\section{New Constructions of QC-LDPC Codes }

\subsection{QC-LDPC Codes by Diamond-Shape Dispersion}
The construction of QC-LDPC codes, \emph{diamond-shape} dispersion, is first mentioned in [16], which have good performance and are well-known for correcting single burst erasures. In this subsection, we will put it into a more general form such that its degree distributions and dispersion size are more flexible and its rank is possible for analysis. In addition, we will propose a construction method which leads to more row redundancy from the two guidelines.

Let $\bW$ be an $m_w\times n_w$ base matrix over GF($q$) with rank $\mu_w$, where $q=2^r$, $n_w\geq 2$ and $m_w$ is a factor of $n_w=c_w m_w$. We divide it into two parts: the upper matrix $\bW_{u}$, where 
\begin{equation}
w_{u,i,j}= \left\{
\begin{array}{ll}
w_{i,j},  & \mbox{if } i \leq j/c_w \\
0, & \mbox{else}
\end{array}\right.
\end{equation}
and the lower matrix $\bW_{d}$, where 
\begin{equation}
w_{l,i,j}= \left\{
\begin{array}{ll}
w_{i,j},  & \mbox{if } i> j/c_w \\
0, & \mbox{else}
\end{array}\right.
\end{equation}
such that $\bW=\bW_{u}+\bW_{l}$. Then we form a new $m\times n$ base matrix 
\begin{equation}\label{b4}
\bB_{ds}=\left[\begin{array}{l:l}
 \bW_{u} &  \bW_{l}    \\
\hdashline
 \bW_{l} &  \bW_{u}      
\end{array}
\right], 
\end{equation}
where $m=2m_w$ and $n=2n_w$. The parity-check matrix $\bH_{cpm,ds}$ is the $e$-fold dispersion of the base matrix $\bB_{ds}$, where $e=2^r-1$. Since 
$$rank(\bB_{ds})=rank\left(\left[\begin{array}{c:c}
 \bW_{u}+\bW_{l}  &  \bW_{l} +\bW_{u}    \\
\hdashline
 \bW_{l} &  \bW_{u}      
\end{array}
\right] \right)=rank\left(\left[\begin{array}{c:c}
\bW  &  \bW   \\
\hdashline
 \bW_{l} &  \bW_{u}      
\end{array}
\right] \right),$$ 
the rank of $\bB_{ds}$ is at most $\mu_w+m_w$. Considering its $t$-th Hadamard power,
\begin{equation}
rank(\bB_{ds}^{\circ t})=rank\left(\left[\begin{array}{c:c}
 \bW_{u}^{\circ t}+\bW_{l}^{\circ t}  &  \bW_{l}^{\circ t} +\bW_{u}^{\circ t}     \\
\hdashline
 \bW_{l}^{\circ t} &  \bW_{u}^{\circ t}      
\end{array}
\right] \right)=rank\left(\left[\begin{array}{c:c}
 \bW^{\circ t}  &  \bW^{\circ t}     \\
\hdashline
 \bW_{l}^{\circ t} &  \bW_{u}^{\circ t}      
\end{array}
\right] \right),
\end{equation} 
From (33), we have the rank of $\bB_{ds}^{\circ t}$ is at most $\mu_w^\tau(t)+ m_w$. Thus, assuming $m\leq n$, from Theorem 5, 
\begin{equation}
rank(\bH_{cpm,ds}) \leq  \mu_0 + \sum\limits^{r-1}_{i=1} {r\choose i}\min\{ 2m_w,  \mu _w^i +m_w \}.              
\end{equation}
Thus, there are at least 
\begin{equation}
R(\bH_{cpm,ds}) \geq  (2m_w-\mu_0) + \sum\limits^{r-1}_{i=1} {r\choose i}\min\{ 0, m_w-   \mu _w^i  \}            
\end{equation}
redundant rows. Similar to Corollary 3, suppose that $\bW$ contains all nonzero entries, i.e., $\mu_0=m_w+1$, such that $\bH_{cpm,ds}$ have more redundant rows, 
\begin{equation}
R(\bH_{cpm,ds}) \geq  (m_w-1) + \sum\limits^{r-1}_{i=1} {r\choose i}\min\{ 0, m_w-   \mu _w^i \}.           
\end{equation}
From the above redundancy analysis, it is straightforward to construct $\bH_{cpm,ds}$ with redundancies. The base matrix $\bW$ should be low rank (the base matrices of random partition LDPC codes, Latin square LDPC codes and EG LDPC codes with rank 2 satisfy this requirement). Moreover, from (52), $\bW$ should not have zero entries. 

\begin{example}
Based on the Latin Square over GF($2^5$), we construct a base matrix $6\times 24$ $\bW$ with rank 2, which does not contain zero entries. Then, we form a new base matrix $\bB_{ds}$ from (50). From $31$-fold matrix dispersion of $\bB_{ds}$, we obtain a $372\times 1488$ parity-check matrix  $\bH_{cpm,ds}$. Its rank is 327, i.e., there are 45  redundant rows which is exactly lower bounded by  45 (54). The null space of $\bH_{cpm,ds}$ defines a (6,24) QC-LDPC codes with code rate 0.78. In Figure 4, compared with the PEG code with the same code length, column weight and code rate, it has 0.3 dB coding gain at BER $10^{-6}$. Moreover, it is quasi-cyclic which is very cost efficient and is capable of correcting single burst erasures of length less than $5\times 31+1=156$. 
\end{example}

\begin{figure}
 \centering
\includegraphics[width = 4in]{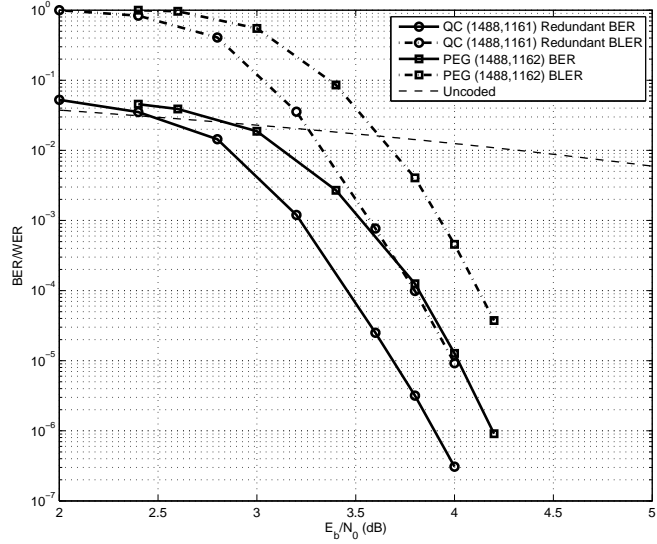}
\caption{The error performances of the $(1488,1161)$ QC-LDPC code given in Example 6.}
\label{example6}
\end{figure}

\subsection{QC-LDPC Codes by Product-like Dispersion}

Let $\left\{ \bW_i,\; 0\leq i< l \right\}$ be a set of $m_w\times n_w$ base matrices over GF($q$). Suppose that the rank of $\bW_i$ is $r_{w,i}$. We form a new base matrix $\bB_{pl}$ from them such that 
\begin{equation}\label{b5}
\bB_{pl}=\left[
\begin{array}{c}
\begin{array}{ccccc}
\bW_0 & {\bf 0} & \cdots & {\bf 0}\\
 {\bf 0}  &  \bW_1 & \cdots &  {\bf 0}\\
 \vdots  & &\vdots &\vdots\\
  {\bf 0} &   {\bf 0}   & \cdots &  \bW_{l}\\
  \hdashline\\
\end{array}\\
{\bf C}_g
\end{array}
\right],
\end{equation}
where ${\bf C}_g$ is an $m_g\times n_w l$ matrix with rank $\mu_g$ such that $\bB_{pl}$ satisfies the $2\times 2$ SM-constraint. The new base matrix $\bB_{pl}$ has dimensions $m=m_g + m_w l $ and $n=n_w l$. Its rank is upper bounded by $\mu_g+\sum \mu_{w,i} $. Similar as (49),  the rank of $\bB_{pl}^{\circ t}$ is upper bounded by $\min\{m_g,  \mu_g^t\} + \sum  \min\{m_w,   \mu_{w,i}^t\}$  . The parity-check matrix $\bH_{cpm,pl}$ is the $e$-fold dispersion of the base matrix $\bB_{pl}$, where $e$ is a factor of $q-1$. Then from Corollary 3, $\bW_i$ should not have zero entries to maximize the redundancies in  $\bH_{cpm,pl}$. Then, the rank of the parity-check matrix $\bH_{cpm,pl}$ is upper bounded by $me-c_1 (m-\sum r_{w,i} + r_g)$.

\begin{example}
Based on the Latin Square over GF($2^4$), we form two matrices $3\times 8$ $\bW_0$ and $\bW_1$. From (\ref{b5}), we have the new base matrix from $\bW_0$, $\bW_1$ and $2 \times 16 $ (2,8) matrix ${\bf C}_g$. From $15$-fold matrix dispersion of $\bB_{pl}$, we obtain a $120\times 240$ parity-check matrix  $\bH_{cpm,pl}$. The null space of $\bH_{cpm,pl}$ defines a (4,8) QC-LDPC codes with code rate 0.56. In Figure 5, compared with the PEG code with the same code length, column weight and code rate, it has 0.2 dB coding gain at BER $10^{-6}$. Moreover, we also constructed a (4,8) PEG code with $120\times 240$   parity-check matrix which is almost full rank for comparison. The proposed code has 0.2 dB coding gain at BER $10^{-6}$.
\end{example}

\begin{figure}
 \centering
\includegraphics[width = 4in]{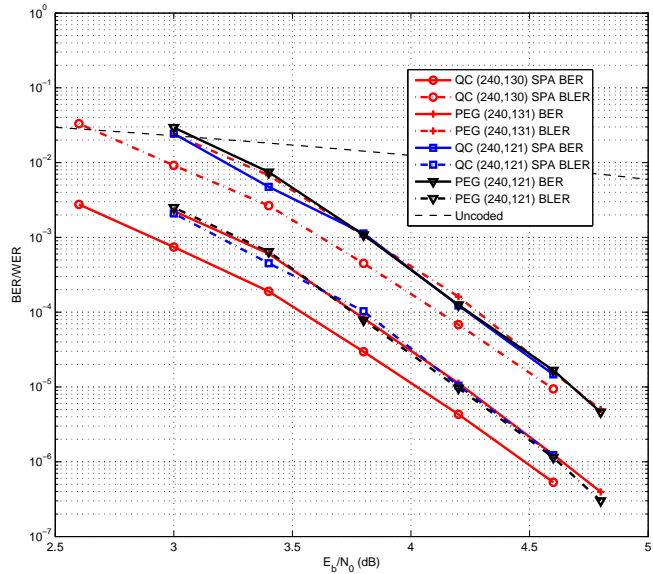}
\caption{The error performances of the $(240,131)$ QC-LDPC code given in Example 7.}
\label{example7}
\end{figure}

\begin{example}
Based on the Latin Square over GF($2^6$), we form three matrices $4\times 18$ $\bW_0$, $\bW_1$ and $\bW_2$. From (\ref{b5}), we have the new base matrix from $\bW_0$, $\bW_1$ and $3 \times 54 $ (1,18) matrix ${\bf C}_g$. From $63$-fold matrix dispersion of $\bB_{pl}$, we obtain a $945 \times 3402$ parity-check matrix  $\bH_{cpm,pl}$. The null space of $\bH_{cpm,pl}$ defines a (5,18) (3402,2502) QC-LDPC codes with code rate 0.74.  In Figure 6, compared with the PEG code with the same code length, column weight and code rate, it has 0.2 dB coding gain at BER $10^{-6}$. 
\end{example}

\begin{figure}
 \centering
\includegraphics[width = 4in]{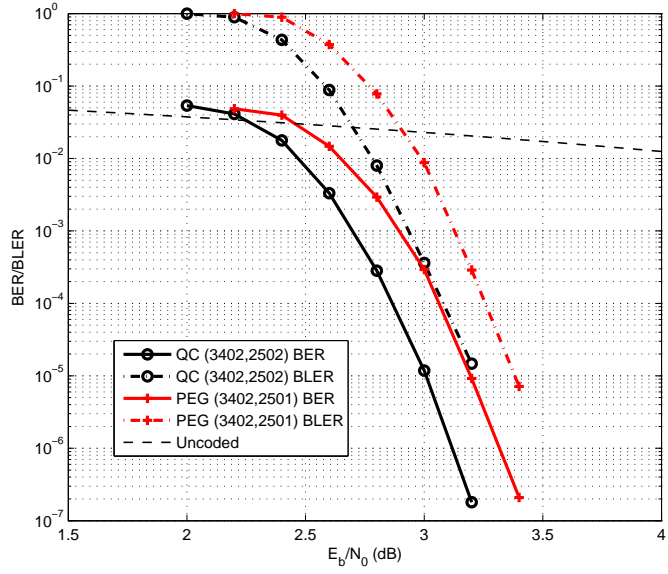}
\caption{The error performances of the $(3402,2502)$ QC-LDPC code given in Example 8.}
\label{example8}
\end{figure}

\subsection{Irregular QC-LDPC Codes by Masking}
It is well known that the parity-check matrices of irregular QC-LDPC codes have few redundant rows. In this sub-section, we design irregular QC-LDPC codes with redundancies for the first time. Let $\bW$ be an $m\times n$ base matrix over GF($q$) with rank $\mu_w$, where $q=2^r$. $\bM$ is the masking matrix with rank $r_m$. Then, a new base matrix $\bB_{ir}$ is formed by the Hadamard product of $\bW$ and $\bM$
$$
\bB_{ir}=\bW \circ \bM.
$$
The parity-check matrix $\bH_{cpm,ir}$ is the $e$-fold dispersion of the base matrix $\bB_{ir}$, where $e=2^r-1$. The rank of $\bH_{cpm,ir}$ is related to $\bB_{ir}$, whose rank is bounded by Theorem 3, $rank(\bB_{ir})\leq rank(\bB_{ir})\times rank(\bM)$. Since there exist base matrices constructed from random partition, Latin squares and Euclidean geometries whose ranks are only 2. The left issue is to find $\bM$ with low rank, so $rank(\bB_{ir})\leq 2 r_m$. If $\bM$ also has good column degree distributions, then we can expect that $\bH_{cpm,ir}$ performs well by message-passing algorithms in the threshold region. Moreover, since $rank(\bH)\leq m e-c_1(m-2r_m)$, extra coding gain from redundancies can be obtained.

Such $\bM$ can be constructed from the circulant matrix $\Phi(\bh)$ formed by the parity-check vector $\bh$ of high-rate cyclic codes ($n$,$k_h$).  $\bM$ can be the sub-matrix or the full-matrix of the product
\begin{equation}
\bC\cdot \Phi(\bh).
\end{equation}
where the $m\times n$ matrix $\bC$ is used to control the degree-distributions  and the dimension of $\bM$. Thus, the rank of $\bM$ is at most $\min\{ n-k_h,m\}$. For example, 
$$
\bC=\left[\begin{array}{ccccc}
1 & 1 & 0 &\cdots&0\\
1 & 0 & 1 &\cdots&0\\
\vdots& & &\vdots&\vdots\\
1 & 0 & 0 & \cdots &1
\end{array}\right].
$$

\begin{example}
Based on the Latin Square over GF($2^6$), we form a $9\times 52$ base matrix $\bW_{ir}$. Since the null space of the minimum polynomial of $x^3+x^2+1$ defines a (63,60) high-rate code, we use its parity-check vector to form a circulant $\Phi(\bh)$ with rank 3. The $\bM$ is a $9\times 52$ sub-array of the product of $\Phi(\bh)$ and $\bC$ such that it has 26 columns with weight 4 and 26 columns with weight 5. The new base matrix $\bB_{ir}$ is formed by the Hadamard product of $\bW$ and $\bM$. From $63$-fold matrix dispersion of $\bB_{ir}$, we obtain a $567 \times 3276$ parity-check matrix  $\bH_{cpm,ir}$ which have 24 redundant rows. The null space of $\bH_{cpm,ir}$ defines a (3276,2733) QC-LDPC codes with code rate 0.834. It only performs 1.4 dB away from the Shannon limit at BER $10^{-6}$ in Figure 7.
\end{example}

\begin{figure}
 \centering
\includegraphics[width = 4in]{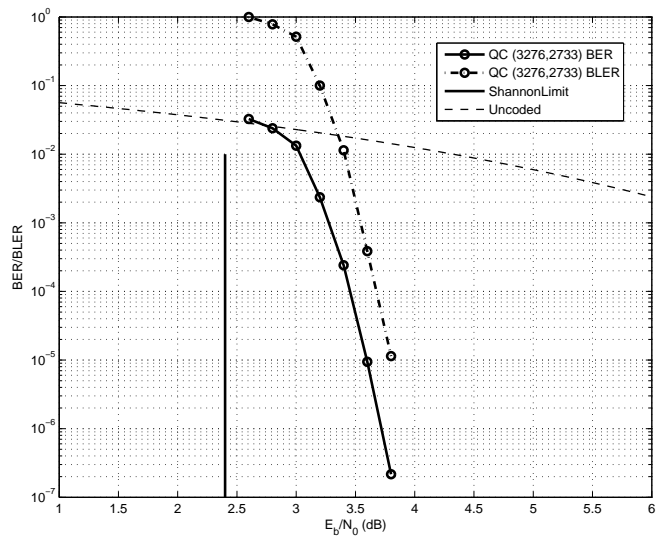}
\caption{The error performances of the $(3276,2733)$ QC-LDPC code given in Example 9.}
\label{example9}
\end{figure}

\section{Conclusion and Remarks}
 Thanks to the rank analyses \cite{K07}, [26] based on the Fourier transform, we could expand the rank analysis and row-redundancy into all QC-LDPC codes.  It is possible to take row-redundancy into account in code constructions such that structured QC-LDPC codes can achieve better performance.  Furthermore, we presented
a class of $2\times 2$ SM-constrained base matrices which are constructed based on partitions of finite fields of characteristic of 2. QC-LDPC codes defined by the null space of these base matrices are flexible for code design. For these codes, the equalities of the above bounds hold.


\begin{thebibliography}{11}

\bibitem{Townsend_67}
R. L. Townsend and E. J. Weldon, Jr., ``Self-orthogonal quasi-cyclic codes,'' \emph{IEEE Trans. Inform. Theory}, vol. IT-13, no. 2, pp. 183-195, Apr. 1967.


\bibitem{Kasami74}
T. Kasami,
``A Gilbert-Varshamov bound for quasi-cycle codes of rate 1/2,''
{\it IEEE Trans. Inform. Theory}, vol. IT-20, no. 5, p. 679, Sep. 1974.


\bibitem{Ga62}
R. G. Gallager,
``Low density parity check codes,''
{\it IRE Trans. Inform. Theory}, vol. IT-8, no. 1, pp. 21--28, Jan. 1962.

\bibitem{MacKay96}
 D. J. C. MacKay and R. M. Neal, ``Near Shannon limit performance of low density parity-check codes,'' \emph{Electro. Lett.}, vol. 32, pp. 1645-1646, Aug. 1996.


\bibitem{M99} D. J. C. MacKay, ``Good error-correcting codes based on very sparse matrices,'' \emph{IEEE
  Trans. Inform. Theory}, vol. 45, no.  2, pp. 399--432, Mar. 1999.

\bibitem{LCZLF06}  Z. Li, L. Chen, L. Zeng, S. Lin and W. Fong, ``Efficient encoding of quasi-cyclic low-density parity-check codes,'' 
\emph{IEEE Trans. Commun.}, vol. 54, no. 1, pp.  71--81, 2006. 


\bibitem{CP04}  Y. Chen and K. Parhi, ``Overlapped message passing for quasi-cyclic
low-density parity check codes,'' \textit{IEEE Trans. Circuits and
Systems I}, vol. 51, no. 6, pp. 1106--1113, Jun. 2004.

\bibitem{WC07} 
Z. Wang and Z. Cui, ``Low-complexity high-speed decoder design for
quasi-cyclic LDPC codes,'' \textit{IEEE Trans. VLSI}, vol. 15, no. 1, pp. 104--114, Jan. 2007.

\bibitem{Tan88}
R. M. Tanner,
``A transform theory for a class of group-invariant codes,''
\emph{IEEE Trans. Inform. Theory}, vol. 34, no. 4, pp. 752--775, Jul. 1988.


\bibitem{KLF01} Y. Kou, S. Lin, and M. P. C. Fossorier, ``Low density parity-check codes based on
  finite geometries: A rediscovery and new results,'' \emph{IEEE Trans. Inform. Theory}, vol. 47,
  no. 7, pp. 2711--2736, Nov. 2001.



\bibitem{LC04} S. Lin and D. J. Costello, Jr., \emph{Error Control Coding: Fundamentals and
    Applications}, 2nd edition. Upper Saddle River, NJ: Prentice Hall, 2004.


\bibitem{RSU01} 
T. Richardson, M. A. Shokrollahi, and R. Urbanke, ``Design of capacity-approaching irregular
 low-density parity-check codes," \textit{IEEE Trans. Inform. Theory}, vol. 47, no. 2, pp. 619--637, Feb. 2001.



\bibitem{DXGL03} I. Djurdjevic, J. Xu, K. Abdel-Ghaffar, and S. Lin,
``A class of low-density parity-check codes constructed based on Reed-Solomon codes with two information symbols,''
\emph{IEEE Commun. Lett.}, vol. 7, no. 7, pp. 317--319, Jul. 2003.

\bibitem{Chen_Lan_Djurdjevic_Lin_04}
L. Chen, L. Lan, I. Djurdjevic, and S. Lin, ``An algebraic method for construction quasi-cyclic LDPC codes,''
Proc. \textit{Int. Symp. Inform. Theory and Its Applications}, Parma, Italy, Oct. 10--13, 2004, pp. 535--539.

\bibitem{TLG05} H. Tang, J. Xu, S. Lin, and K. A. S. Abdel-Ghaffar, ``Codes on finite geometries,'' 
\emph{IEEE Trans. Inform. Theory,} vol. 51, no. 2, 
pp. 572--596, Feb. 2005.


\bibitem{TLZLG06} Y. Y. Tai, L. Lan, L. Zheng, S. Lin and K. Abdel-Ghaffar, 
`` Algebraic construction of quasi-cyclic LDPC codes for the AWGN and erasure channels,'' 
\emph{IEEE Trans. Commun.}, vol 54, no. 7, pp. 1765--1774, Oct. 2006.


\bibitem{Xu-Chen-2007}
J. Xu, L. Chen, I. Djurdjevic, S. Lin, and K. Abdel-Ghaffar,
``Construction of regular and irregular LDPC codes: Geometry
decomposition and masking,''
 {\it IEEE Trans. Inform. Theory}, vol. 53, no. 1, pp. 121--134, Jan. 2007.


\bibitem{K07} N. Kamiya, ``High-rate quasi-cyclic low-density parity-check codes derived from finite affine planes,'' \emph{IEEE
          Trans. Inform. Theory}, vol. 53, no. 4, pp. 1444--1459, Apr. 2007.

 
\bibitem{LZTCLG07} L. Lan, L. Zeng, Y. Y. Tai, L. Chen, S. Lin, and K. Abdel-Ghaffar, ``Construction of
  quasi-cyclic LDPC codes for AWGN and binary erasure channels: A finite field approach,''
  \emph{IEEE Trans. Inform. Theory}, vol. 53, no. 7, pp. 2429--2458, Jul. 2007.



\bibitem{SZLG09} S. Song, B. Zhou, S. Lin, and K. Abdel-Ghaffar, 
``A unified approach to the construction of binary and nonbinary quasi-cyclic LDPC codes based on finite fields,'' 
\emph{IEEE Trans. Commun.}, vol. 57, no. 1, pp. 84--93, Jan. 2009.

\bibitem{JingyuKang2010}
J. Y. Kang, Q. Huang, L. Zhang, B. Zhou, and S. Lin, ``Quasi-Cyclic
LDPC Codes: An Algebraic Construction,'' {\it IEEE
Trans. Commun.}, vol. 58, no. 5, pp. 1383--1396, May 2010.


\bibitem{ZHLG10}  L. Zhang, Q. Huang, S. Lin, and  K. Abdel-Ghaffar, 
``Quasi-cyclic LDPC codes: An algebraic construction, rank analysis, and codes on Latin  squares,'' 
\emph{IEEE Trans. Commun.}, vol. 58, no. 11, pp. 3126--3139, Nov. 2010.



\bibitem{ZhangLinGhaffarDingZhou2011} L. Zhang, S. Lin, K. A. Ghaffar, Z. Ding, and B. Zhou, ``Quasi-cyclic LDPC codes on cyclic subgroups of finite fields,'' \emph{IEEE Trans. Commun.}, IEEE Trans. Commun., vol. 59, no. 9, pp. 2330-2336, Sep. 2011.

\bibitem{Huang_Diao_Lin_A-G}
Q. Huang, Q. Diao, S. Lin, and K. Abdel-Ghaffar, 
``Cyclic and quasi-cyclic LDPC codes on constrained parity-check matrices and their trapping sets,''
\emph{IEEE Trans. Inform. Theory}, to appear. 



\bibitem{HLG_12}
 Q. Diao, Q. Huang, S. Lin, and K. Abdel-Ghaffar, 
``A matrix theoretic approach for analyzing quasi-cyclic low-density parity-check codes,''
\emph{IEEE Trans. Inform. Theory}, to appear. 

\bibitem{HLG_10}
 Q. Diao, Q. Huang, S. Lin, and K. Abdel-Ghaffar, 
``A transform approach for computing the ranks of parity-check matrices of quasi-cyclic LDPC codes,'' \emph{Proc. 2011 IEEE Int. Symp. Inform. Theory}, SaintPetersburg, Russia, pp. 366-379, July 31-Aug. 5, 2011.


\bibitem{B83}  R. E. Blahut,  \emph{Theory and Practice of Error Control Codes}. Reading, MA: Addison-Wesley, 1983.


\bibitem{R06}  R. M. Roth,  \emph{Introduction to Coding Theory}. Cambridge, UK: Cambridge University Press, 2006.


 \bibitem{LN94} 
R. Lidl and H. Niederreiter, \emph{Introduction to Finite Fields and
their Applications}. revised ed. Cambridge, UK: Cambridge University Press, 1994.
         
         
\bibitem{B84} E. R. Berlekamp,  \emph{Algebraic Coding Theory}. New-York, NY: McGraw-Hill, 1964. 
(Rev. ed. Laguna Hills, CA: Aegean Park Press, 1984.
        
        
        
\bibitem{ZengLanTaiSongLinA-G_08} 
L. Zeng, L. Lan, Y. Y. Tai, S. Song, S. Lin, and K. Abdel-Ghaffar,
``Constructions of nonbinary quasi-cyclic LDPC codes: A finite field approach,''
{\it IEEE Trans. Commun.},
vol. 56, pp. 545--554, April 2008.



\bibitem{Million2007} 
Robert A. Beezer, ``A first course in linear algebra.'' \emph{Robert A. Beezer}, 1.08 edition, 2007.
         
\bibitem{HW1979}
G. H. Hardy and E. M. Wright,
\emph{An Introduction to the Theory of Numbers}, 5th Ed. Oxford, UK: Oxford University Press, 1979.         
         
         
\end{thebibliography}
\end{document}